\newcommand{\Rset}{\mathbb{R}}
\newtheorem{lem}{Lemma}
\begin{document}

\title{Robust recoverable 0-1 optimization problems under polyhedral uncertainty} 

\author{Mikita Hradovich$^\dag$, Adam Kasperski$^\ddag$,  Pawe{\l} Zieli{\'n}ski$^\dag$\\
          {\small \textit{$^\dag$Faculty of Fundamental Problems of Technology,}}\\
	{\small \textit{Wroc{\l}aw University of Technology,  Wroc{\l}aw, Poland}}\\
	 {\small \textit{$^\ddag$Faculty of Computer Science and Management,}}\\
	{\small \textit{Wroc{\l}aw University of Technology,  Wroc{\l}aw, Poland}}\\
	{\small \texttt{\{mikita.hradovich,adam.kasperski,pawel.zielinski\}@pwr.edu.pl}}
}
 \date{}
    
\maketitle

 \begin{abstract}
 This paper deals with
 a robust recoverable approach to  0-1 programming problems. It is assumed that a solution constructed in the first stage can be modified to some extent in the second stage. This modification consists in choosing a solution in some prescribed neighborhood of the current solution. The second stage solution cost can be uncertain and a polyhedral structure of uncertainty is used. The resulting robust recoverable problem is a min-max-min problem, which can be hard to solve when the number of variables is large. 
In this paper we provide  a framework for solving robust recoverable 0-1 programming problems with a specified polyhedral uncertainty and
propose several lower bounds and approximate solutions, which can be used for a wide class of 0-1 optimization problems. The results of computational tests for two problems, namely the assignment and the knapsack ones, are also presented.
 \end{abstract}

\section{Introduction}

In many practical applications of discrete optimization a solution constructed in the first stage can be modified to some extent, after the problem parameters have been changed. For example, a connection network (represented by a spanning tree) 
can be modified by exchanging some links, after the structure of the link costs has been changed. Typically, it is not possible to reconstruct the whole network, but only a given fraction of links can be exchanged in the current solution.  Models, which take such a situation into account,  belong the the class of  \emph{two-stage recoverable ones}. Usually, the parameters in the
 \emph{first stage} are precisely known, while the future, \emph{second-stage} values of the parameters are uncertain. However, the second-stage decision, called also a \emph{recourse action}, can be made after the true values of the parameters have been revealed.   

The idea of recoverable robustness, first proposed in~\cite{LLMS09}, is similar to the so-called 
\emph{adjustable robustness} in mathematical programming~\cite{BTG04}, in which the values of some decision variables  can be fixed after the true parameter realization has been revealed. It is also similar to the min-max-min approach, recently discussed in~\cite{BK17}, where a subset of solutions is computed in the first stage and one of them is adopted, when the second-stage scenario becomes known. In the recoverable approach a modification of the first-stage solution is allowed, while in the problems studied in~\cite{BK17} a complete solution can be chosen from a small subset of solutions chosen in the first stage.


In  the recoverable model a neighborhood of a given solution is defined. Each solution in this neighborhood is within some predefined distance from the current solution. For 0-1 programming problems, there are several natural ways of defining the distance function (see, e.g.,~\cite{SAO09}), and we will describe them in more detail in Section~\ref{secprob}.   A solution constructed in the first stage can be then replaced with a solution in its neighborhood in the second stage. The goal is to compute a pair of solutions, namely the \emph{first stage solution} and the \emph{second stage one} in its neighborhood, which minimize the total first and second stage costs. If the second stage costs are uncertain, then the robust approach can be used (see, e.g.,~\cite{BN09, KY97}) and we seek a pair which minimizes the total cost in a worst case. In this paper we will use the following \emph{polyhedral uncertainty representation}.  For each second stage cost we specify an interval of its possible values. An adversary can increase each cost within its interval but the total increase cannot be greater than a given \emph{budget}. Additional linear constraints, which model relationships between the costs, are also allowed. The resulting uncertainty set is a polyhedron, whose special case is the  interval budgeted uncertainty set, discussed in~\cite{CGKZ18,NO13}.  

The robust recoverable problems are a far-reaching generalization of the class of \emph{incremental problems} considered in~\cite{SAO09}. In an incremental problem we seek a cheapest solution within a neighborhood of a fixed solution. In~\cite{SAO09} the class of incremental network problem was examined. It has been shown that the incremental version of some basic network problems (for example, the shortest path) can be NP-hard for some natural neighborhood definitions. The robust recoverable network problems were first studied in~\cite{B11,B12}, where several negative complexity results were obtained. Recently, the robust recoverable versions of the minimum spanning tree~\cite{HKZ16a,HKZ16},  selection~\cite{CGKZ18,KZ15b}, and traveling salesperson~\cite{CG15b} problems were discussed. We will briefly describe complexity results for these problems in Section~\ref{seccompl}.

The goal of this paper is to provide a framework for solving robust recoverable 0-1 programming problems with a specified polyhedral uncertainty.
In general, such problems can be very complex from a computational point of view. The underlying deterministic single-stage problem can be already NP-hard and hard to approximate. Adding recoverable robustness leads to a min-max-min  0-1 programming problem, which can be very difficult to solve.  If the uncertainty set is finite, i.e. contains the 
finite numbers of scenarios, or can be replaced with its finite representation (for example by the set of extreme points of a polytope for polyhedral uncertainty), then a mixed integer programming (MIP for short) formulation can be built and solved by row and column generation techniques proposed, for example, in~\cite{ZZ13}. However, for the problems examined in this paper no finite representation of the considered uncertainty set is known. Hence, the solution method consisting in solving a MIP formulation can be hard to apply.  In Section~\ref{seclb} we propose several lower bounds, which will be based on solving one or a sequence of special MIP formulations. The formulations can be solved for quite large instances by using modern solvers. We will then use these lower bounds to characterize the quality of some approximate solutions in Section~\ref{secub}.  Finally, in Section~\ref{secexp}, we will present the results of the experiments for robust recoverable versions of the knapsack and assignment problems. The tested instances have large number of variables, so the proposed approach can be attractive in practical applications.

\section{Problem formulation}
\label{secprob}

Consider the following generic 0-1 programming problem~$\mathcal{P}$:
$$\begin{array}{llll}
		\min & \pmb{C}\pmb{x} \\
			& \pmb{x}\in \mathcal{X} \subseteq \{0,1\}^n,\\
	\end{array}
	$$
where $\pmb{C}=[C_1,\dots,C_n]$ is a vector of nonnegative costs and $\mathcal{X}$ is a set of feasible solutions. 
No additional assumptions about problem $\mathcal{P}$ are made here. In particular, $\mathcal{P}$ can be NP-hard and not at all approximable. Let $I(\pmb{x})=\{i\in [n]: x_i=1\}$, be the set of \emph{elements} contained in solution $\pmb{x}$ (we will 
denote by $[n]$ the set $\{1,\dots,n\}$).
Fix $\alpha\in [0,1]$, $\pmb{x}\in \mathcal{X}$, and let $\mathcal{X}^\alpha_{\pmb{x}}\subseteq \mathcal{X}$ be a \emph{neighborhood} of $\pmb{x}$. 
In this paper we will use the \emph{element exclusion neighborhood}~\cite{SAO09}, defined as follows:

\begin{equation}
\label{ndef}
\mathcal{X}^\alpha_{\pmb{x}}=\{\pmb{y}: |I(\pmb{x})\setminus I(\pmb{y})|\leq \alpha |I(\pmb{x})|\}=\{\pmb{y}\in \mathcal{X}: \sum_{i\in [n]} x_i(1-y_i)\leq \alpha\sum_{i\in [n]} x_i\}.
\end{equation}
Thus $\pmb{y}\in \mathcal{X}^\alpha_{\pmb{x}}$, if the number of elements which are in $I(\pmb{x})$ and not in $I(\pmb{y})$ is at most $\alpha |I(\pmb{x})|$.  In other words, forming solution $\pmb{y}$ from solution $\pmb{x}$, at most $\alpha|I(\pmb{x})|$ elements of $I(\pmb{x})$ can be removed.
%
Our definition of the neighborhoods slightly differs from the one considered in~\cite{SAO09}. However, in the context of the incremental problem, where $\pmb{x}$ (and thus $|I(\pmb{x})|$) is fixed, it is equivalent to that in~\cite{SAO09}. In the context of the recoverable approach, where $\pmb{x}$ is variable, it is more natural to provide a constant fraction $\alpha$, which is independent on $|I(\pmb{x})|$. Some other types of neighborhood, such as the 
\emph{element inclusion} or based on the \emph{symmetric difference}, were also examined in~\cite{SAO09}. The results presented in this paper can easily be modified for the aforementioned neighborhoods. It is sufficient to change the constraint,  describing the neighborhood in~(\ref{ndef}). Let $(\pmb{x}, \pmb{y})$, where $\pmb{x}\in \mathcal{X}$ and $\pmb{y}\in \mathcal{X}^\alpha_{\pmb{x}}$ be a \emph{feasible pair} of solutions. We will denote by $\mathcal{Z}$ the set of all such feasible pairs.

It is worth noting that all the neighborhoods,
based on the element inclusion,   the element exclusion and the symmetric difference, considered in~\cite{SAO09}
 are equivalent if $|I(\pmb{x})|=m$, $m\in [n]$, for each $\pmb{x}\in \mathcal{X}$. We will call such a problem $\mathcal{P}$,  an \emph{equal cardinality} problem.
We can then write
 $$\mathcal{X}^\alpha_{\pmb{x}}=\{\pmb{y}: |I(\pmb{x})\cap I(\pmb{y})|\geq  m(1-\alpha) \}=\{\pmb{y}\in \mathcal{X}: \sum_{i=1}^n x_iy_i\geq \ell\},$$
 where $\ell=\lceil m(1-\alpha) \rceil$ is a fixed integer, independent on solution $\pmb{x}$. 
  
Given $\pmb{x}\in \mathcal{X}$, and a vector of nonnegative costs $\pmb{c}=[c_1,\dots,c_n]$, we define the following \emph{incremental problem}:
$$\textsc{Inc}(\pmb{x},\pmb{c})=\min_{\pmb{y}\in \mathcal{X}^\alpha_{\pmb{x}}} \pmb{c}\pmb{y}.$$
In this problem we seek a solution $\pmb{y}$ in the neighborhood of $\pmb{x}$ of a minimum cost.  The incremental problem, for various network problems $\mathcal{P}$, was discussed in~\cite{SAO09}. In general, it can be harder than~$\mathcal{P}$ and its complexity will be presented in more detail in Section~\ref{seccompl}. Given a cost vector $\pmb{c}$, let us formulate the following \emph{recoverable problem}:
$$\textsc{Rec}(\pmb{c})= \min_{\pmb{x}\in \mathcal{X}} \min_{\pmb{y}\in \mathcal{X}^\alpha_{\pmb{x}}} (\pmb{C}\pmb{x}+\pmb{c}\pmb{y})=\min_{(\pmb{x},\pmb{y})\in \mathcal{Z}}(\pmb{C}\pmb{x}+\pmb{c}\pmb{y}).$$
In the recoverable problem we wish to find  a pair of solutions $\pmb{x}$ and $\pmb{y}\in  \mathcal{X}^\alpha_{\pmb{x}}$ (equivalently $(\pmb{x},\pmb{y})\in \mathcal{Z}$), which minimizes the total cost of both solutions. We will call $\pmb{x}$ the \emph{first stage solution} and $\pmb{y}$ the \emph{second stage solution}. The recoverable problem is a generalization of the incremental problem. Indeed, one can easily reduce the incremental problem for a given solution $\pmb{x}$ and cost vector $\pmb{c}$ to the recoverable problem by fixing $C_i=0$ if $x_i=1$ and $C_i=M$ if $x_i=0$, $i\in [n]$, where
 $M$ is a sufficiently large constant.

Suppose that the second stage cost vector $\pmb{c}$ is  uncertain, which means that precise cost values are not known 
in advance. Typically, uncertainty in parameters (costs) is modeled by specifying a set, denoted by 
$\mathcal{U}$, of all possible vectors of the parameter (cost) values, called \emph{scenarios}.
In this paper we focus on \emph{polyhedral uncertainty representation} and 
define the uncertainty set~$\mathcal{U}$ as follows.
 We are given a vector of \emph{nominal second stage costs} $\underline{\pmb{c}}=[\underline{c}_1,\dots,\underline{c}_n]$ and a vector $\pmb{d}=[d_1,\dots,d_n]$ of maximal \emph{deviations} of the costs from their nominal values. We also specify a \emph{budget} $\Gamma$, i.e. the amount of uncertainty, which can be allocated to the second stage costs. We consider the following polyhedral uncertainty set: 
\begin{equation}
\label{defunc}
\mathcal{U}=\{\underline{\pmb{c}}+\pmb{\delta}: \pmb{0}\leq \pmb{\delta}\leq \pmb{d}, ||\pmb{\delta}||_{1}\leq \Gamma, \pmb{\delta}\in\mathcal{V}\},
\end{equation}
where the vector $\pmb{\delta}=[\delta_1,\dots,\delta_n]$ represents deviations of the costs from their nominal values and $\mathcal{V}\subseteq \Rset^n$ is a polyhedron described by some additional linear constraints involving $\pmb{\delta}$. To ensure that $\underline{\pmb{c}}\in \mathcal{U}$, we will assume that $\pmb{0}\in \mathcal{V}$. One can optionally use $\mathcal{V}$ to model some additional relationships between the uncertainty of the costs. For example, a constraint $\sum_{i\in A} \delta_i\leq \Gamma_A$  can represent the situation in which a subset of the costs has its own budget $\Gamma_{A}\leq \Gamma$. Another constraints of the form $\alpha_{ij} \delta_i \leq \delta_j \leq \beta_{ij}\delta_i$, for some fixed $\alpha_{ij}\leq \beta_{ij}$, can model a possible correlation between $\delta_i$ and $\delta_j$. If $\mathcal{V}=\Rset^n$, then we get the following \emph{continuous budgeted interval uncertainty representation}~\cite{NO13}:
$$\mathcal{U}_0=\{(\underline{c}_i+\delta_i)_{i\in [n]}: \delta_i\in [0, d_i], \sum_{i\in[n]} \delta_i\leq \Gamma\}.$$
As $\mathcal{U}\subseteq \mathcal{U}_0$, the set $\mathcal{U}_0$ is a relaxation of $\mathcal{U}$. Given $\mathcal{U}$, we will investigate the following \emph{robust recoverable problem}:
$$\textsc{Rob-Rec}:\; \min_{\pmb{x}\in \mathcal{X}} \max_{\pmb{c}\in \mathcal{U}} \min_{\pmb{y}\in \mathcal{X}^\alpha_{\pmb{x}}} (\pmb{C}\pmb{x}+\pmb{c}\pmb{y}).$$
The robust recoverable problem generalizes the recoverable problem. It is enough to fix $\underline{\pmb{c}}=\pmb{c}$ and $\pmb{d}=\pmb{0}$ (or $\Gamma=0$). Assume that $\mathcal{X}_{\pmb{x}}^\alpha=\{\pmb{x}\}$. This is true, for example, when $\mathcal{P}$ is an equal cardinality problem and $\alpha=0$. If additionally $\pmb{C}=\pmb{0}$, then \textsc{Rob-Rec} can be rewritten as 
$$\min_{\pmb{y}\in \mathcal{X}} \max_{\pmb{c}\in \mathcal{U}} \pmb{c}\pmb{y},$$
which is a traditional \emph{single-stage robust min-max problem} widely discussed in the literature (see, e.g.,~\cite{KY97, BN09, BS03}). In this paper we will also consider the following \emph{evaluation}   problem:
$$\textsc{Eval}(\pmb{x})=\pmb{C}\pmb{x}+\max_{\pmb{c}\in \mathcal{U}} \min_{\pmb{y}\in \mathcal{X}^\alpha_{\pmb{x}}} \pmb{c}\pmb{y}=\pmb{C}\pmb{x}+\max_{\pmb{c}\in \mathcal{U}} \textsc{Inc}(\pmb{c},\pmb{x}),$$
where the inner maximization problem is called the \emph{adversarial problem}~\cite{NO13}.
Observe that solving \textsc{Rob-Rec} consists in minimizing $\textsc{Eval}(\pmb{x})$ over $\pmb{x}\in\mathcal{X}$. The evaluation problem generalizes the incremental one, as we get the latter problem by fixing $\pmb{C}=\pmb{0}$ and $\Gamma=0$.

\section{The computational complexity of problems}
\label{seccompl}

The computational  complexity of the incremental, recoverable, robust recoverable and adversarial problems is not less than the complexity of the generic problem $\mathcal{P}$. So, all these problems are NP-hard if $\mathcal{P}$ is already NP-hard. 
However, even the incremental version of $\mathcal{P}$ can be much harder than $\mathcal{P}$.
It has been shown in~\cite{SAO09}, then the incremental  shortest path problem ($\mathcal{X}$ is the set of characteristic vectors of simple paths in a given graph) for the element exclusion neighborhood is NP-hard and hard to approximate (interestingly, the incremental shortest path problem with the element inclusion neighborhood is polynomially solvable~\cite{SAO09}).
The incremental minimum assignment problem ($\mathcal{X}$ is the set of characteristic vectors of perfect matchings in a bipartite graph) is equivalent to the minimum exact matching problem, for which no polynomial time algorithm is known~\cite{SAO09}. 

The incremental problem is polynomially solvable, when $\mathcal{P}$ is the minimum spanning tree~\cite{SAO09}. It turns out that also the more general recoverable version of this problem, i.e. computing $\textsc{Rec}(\pmb{c})$ for a given scenario $\pmb{c}$, is polynomially solvable~\cite{HKZ16, HKZ16a}. In fact, the recoverable problem can be solved in polynomial time if $\mathcal{P}$ has a matroidal structure~\cite{HKZ16}. Namely, $\mathcal{X}$ is the set of characteristic vectors of the bases of some matroid.
 In particular, it can be solved efficiently when $\mathcal{P}$ is the selection problem~\cite{KZ15b}, i.e. when $\mathcal{X}=\{\pmb{x}\in \{0,1\}^n: x_1+\dots+x_n=p\}$ for a fixed constant $p\in [n]$.  Notice that that matroidal problems have the equal cardinality property, as each matroid base has the same number of elements. On the other hand, the recoverable version of the shortest path problem is NP-hard and hard to approximate for both element inclusion and element exclusion neighborhoods~\cite{B12,SAO09}.
 
 Consider now the \textsc{Rob-Rec} problem under the uncertainty set $\mathcal{U}$ discussed in this paper. Of course, this problem is not easier than $\textsc{Rec}(\pmb{c})$. So, it is interesting to characterize its complexity when the underlying recoverable problem is polynomially solvable. It has been recently shown that \textsc{Rob-Rec} is polynomially solvable under $\mathcal{U}_0$, when $\mathcal{P}$ is the selection problem~\cite{CGKZ18}. However, the complexity of \textsc{Rob-Rec} under $\mathcal{U}_0$ for other matroidal problems, in particular for the minimum spanning tree, remains open (the corresponding evaluation problem can be solved in polynomial time~\cite{NO13}). 
 
 Let us now briefly describe the known results for \textsc{Rob-Rec} under $\mathcal{U}_0$, with comparison to other uncertainty sets.  We will consider the \emph{discrete uncertainty set} $\mathcal{U}^D=\{\pmb{c}^1,\dots,\pmb{c}^r\}$, 
 where $\mathcal{U}^D$ contains  $r>1$ explicitly listed
cost scenarios $\pmb{c}^k$, $k\in [r]$,
 and 
 the \emph{discrete budgeted uncertainty set} $\mathcal{U}^1=\{ (\underline{c}_i+\delta_i)_{i\in [n]} : \delta_i \in [0,d_i], |\{i\in[n] : \delta_i>0\}| \leq \Gamma \}$ proposed in~\cite{BS03}. The known complexity results for basic problems $\mathcal{P}$ are summarized in Table~\ref{tab0}. 
\begin{table}[ht]
\centering
\footnotesize
\caption{Summary of complexity results for basic problem $\mathcal{P}$:  SEL - the selection problem, ST - 
the minimum spanning tree problem, SP (Incl.) - the shortest path  problem with the element inclusion neighborhood, SP (Excl.) - the shortest path problem  with the element exclusion neighborhood; \textbf{P} - polynomially solvable, \textbf{H} - NP-hard.} \label{tab0}
\begin{tabular}{l| ll | ll | ll | ll} 
& & &  \multicolumn{2}{c|}{$\mathcal{U}_0$} & \multicolumn{2}{c|}{$\mathcal{U}^1$} & \multicolumn{2}{c}{$\mathcal{U}^D$} \\
$\mathcal{P}$ 		 &   $\textsc{Inc}$ & $\textsc{Rec}$ & $\textsc{Eval}$ & $\textsc{Rob-Rec}$ & $\textsc{Eval}$ & $\textsc{Rob-Rec}$ & $\textsc{Eval}$ & $\textsc{Rob-Rec}$ \\ \hline
SEL & \textbf{P}~\cite{KZ15b} & \textbf{P}~\cite{KZ15b} & \textbf{P}~\cite{CGKZ18} & \textbf{P}~\cite{CGKZ18} & \textbf{P}~\cite{CGKZ18} & ? & \textbf{P}~\cite{KZ15b} & \textbf{H}~\cite{A01, KZ15b}  \\
ST & \textbf{P}~\cite{SAO09} & \textbf{P}~\cite{HKZ16, HKZ16a} & \textbf{P}~\cite{NO13} & ?  & \textbf{H}~\cite{LC93, NO13} & \textbf{H}~\cite{LC93, NO13} & \textbf{P}~\cite{SAO09} & \textbf{H}~\cite{KY97}\\
SP (Incl.) & \textbf{P}~\cite{SAO09} & \textbf{H}~\cite{B12} & \textbf{P}~\cite{NO13} & \textbf{H}~\cite{B12} & \textbf{H}~\cite{B12, BKS95, NO13} & \textbf{H}~\cite{B12, BKS95, NO13} & \textbf{P}~\cite{SAO09} & \textbf{H}~\cite{B12, KY97} \\
SP (Excl.) & \textbf{H}~\cite{SAO09} & \textbf{H}~\cite{SAO09} & \textbf{H}~\cite{SAO09} & \textbf{H}~\cite{SAO09} & \textbf{H}~\cite{SAO09}  & \textbf{H}~\cite{SAO09}& \textbf{H}~\cite{SAO09} & \textbf{H}~\cite{SAO09} 
\end{tabular}
\end{table}

Under $\mathcal{U}^D$, the evaluation problem has the same complexity as the incremental problem, because it reduces to solving $r$ incremental problem $\textsc{Inc}(\pmb{x},\pmb{c}^k)$ for $k\in [r]$. For this uncertainty representation, the 
single-stage robust min-max problem has been extensively discussed and all the negative results obtained in this area (see, e.g.,~\cite{A01, KY97}) remain valid for \textsc{Rob-Rec}. All the negative results presented in Table~\ref{tab0} remain true even if $r=2$. Observe that the complexity of  the \textsc{Rob-Rec} version of  the selection problem under $\mathcal{U}^1$ and the minimum spanning tree problem under $\mathcal{U}_0$ are interesting open problems. The complexity of the selection problem under more general set $\mathcal{U}$ of the form~(\ref{defunc}) is unknown. However, the problem is NP-hard when the uncertainty set is any polyhedron. It is enough to observe that the single-stage robust min-max problem for two scenarios $\pmb{c}^1$ and $\pmb{c}^2$ is equivalent to the single-stage robust
min-max problem under $\mathcal{U}'=\mathrm{conv}\{\pmb{c}^1,\pmb{c}^2\}$ and the former problem is known to be NP-hard~\cite{A01}. But, it is not clear that $\mathcal{U}'$ can be represented as~(\ref{defunc}).


\section{Solving the problems by MIP formulations}
\label{spmip}

The incremental and recoverable problems for the element exclusion neighborhood~(\ref{ndef}) can  be solved by using the following MIP formulations (\ref{eq0}a) and~(\ref{eq0}b), respectively:

\begin{equation}
\label{eq0}
\begin{array}{lll}
(a) \begin{array}{llll}
		\min & \pmb{c}\pmb{y} \\
			& \displaystyle \sum_{i\in [n]} x_i(1-y_i)\leq \alpha \sum_{i\in [n]} x_i\\
			&\pmb{y}\in \mathcal{X}\\
	\end{array}
	& (b)
	\begin{array}{llll}
			 \min &  \pmb{C}\pmb{x}+\pmb{c}\pmb{y} \\
			& \displaystyle \sum_{i\in [n]} (x_i-z_i)\leq \alpha \sum_{i\in [n]} x_i\\
			& z_i\leq x_i & i\in [n] \\
			& z_i\leq y_i & i\in [n]\\
			& z_i\geq x_i+y_i-1 & i\in [n]\\
			& z_i\in \{0,1\} & i\in [n]\\
			& \pmb{x}, \pmb{y}\in \mathcal{X}\\
	\end{array}
	\end{array}
\end{equation}

The MIP formulations can be simplified if $\mathcal{P}$ is an equal cardinality problem. The incremental and recoverable problems can be then formulated as follows:

\begin{equation}
\label{eq1}
\begin{array}{llll}
(a)
\begin{array}{llll}
		\min & \pmb{c}\pmb{y} \\
			& \sum_{i\in [n]} x_iy_i\geq  \ell\\
			& \pmb{y}\in \mathcal{X}\\
	\end{array}
	& (b)
	\begin{array}{llll}
		\min & \pmb{C}\pmb{x}+\pmb{c}\pmb{y} \\
			& \sum_{i\in [n]} z_i \geq \ell \\
			& z_i\leq x_i & i\in [n] \\
			& z_i\leq y_i & i\in [n]\\
			& z_i\geq 0 & i\in [n]\\
			& \pmb{x}, \pmb{y}\in \mathcal{X}\\
	\end{array}
	\end{array}
\end{equation}
Observe that we can drop the assumption that $z_i$ is binary in~(\ref{eq1}b).  The algorithms described in the next part of this paper will be based on the assumption that the formulations~(\ref{eq0}) and~(\ref{eq1}) can be solved exactly in reasonable time. To this purpose one can use a good off-the-shelf MIP solvers.

Consider now the evaluation problem, i.e. the problem of computing the value of $\textsc{Eval}(\pmb{x})$.
Given $\pmb{x}$, the inner adversarial problem $\max_{\pmb{c}\in \mathcal{U}} \textsc{Inc}(\pmb{x},\pmb{c})$ can be represented as the following linear programming problem.
\begin{equation}
\label{eq3a}
\begin{array}{llll}
		\max & t\\
			& t\leq \pmb{c}\pmb{y} & \forall \pmb{y} \in \mathcal{X}_{\pmb{x}}^\alpha\\
			& \pmb{c}\in \mathcal{U}
	\end{array}
\end{equation}
If $t_{opt}$ is the optimal value of $t$, then $\textsc{Eval}(\pmb{x})=\pmb{C}\pmb{x}+t_{opt}$.  Notice that~(\ref{eq3a}) is a linear programming problem, since $\pmb{c}\in \mathcal{U}$ can be described by a system of linear constraints with real variables and the set $ \mathcal{X}_{\pmb{x}}^\alpha$ is finite. However, formulation~(\ref{eq3a}) has exponential number of constraints. If we replace $\mathcal{X}_{\pmb{x}}^\alpha$ with a subset $\mathcal{Y} \subseteq \mathcal{X}_{\pmb{x}}^\alpha$ of feasible solutions, then we get an upper bound on $t_{opt}$. Also the value of $\textsc{Inc}(\pmb{x},\pmb{c})$, for any $\pmb{c}\in \mathcal{U}$, is a lower bound on $t_{opt}$. Thus
in order to find the value of $t_{opt}$ with a given accuracy $\epsilon\geq 0$, we can use a relaxation (constraint generation) algorithm shown in the form of Algorithm~\ref{alg1}.
\begin{algorithm}
\caption{Compute $\textsc{Eval}(\pmb{x})$ with accuracy $\epsilon$.} \label{alg1}
\begin{algorithmic}[1]
	\STATE $UB:=\infty$
	\STATE Choose an initial scenario $\pmb{c}_0\in \mathcal{U}$  \label{alg1_2}
	\STATE Solve $\textsc{Inc}(\pmb{x},\pmb{c}_0)$ obtaining $\pmb{y}^*\in \mathcal{X}_{\pmb{x}}^\alpha$, $LB:=\textsc{Inc}(\pmb{x},\pmb{c}_0)$
	\STATE $\mathcal{Y}:=\{\pmb{y}^*\}$
	\WHILE{$\frac{UB-LB}{LB}>\epsilon$}
		\STATE Solve the formulation~(\ref{eq3a}) with $\mathcal{X}_{\pmb{x}}^\alpha=\mathcal{Y}$ obtaining $(\pmb{c}^*,t^*)$, $UB:=t^*$ \label{alg1_6}
		\STATE Solve $\textsc{Inc}(\pmb{x},\pmb{c}^*)$ obtaining $\pmb{y}^*\in \mathcal{X}_{\pmb{x}}^\alpha$
		\STATE \textbf{if} $LB<\textsc{Inc}(\pmb{x},\pmb{c}^*)$ \textbf{then} $LB:=\textsc{Inc}(\pmb{x},\pmb{c}^*)$
		\STATE $\mathcal{Y}:=\mathcal{Y}\cup \{\pmb{y}^*\}$\label{alg1_9}
	\ENDWHILE
	\STATE \textbf{return} $\pmb{C}\pmb{x}+UB$.
\end{algorithmic}
\end{algorithm}

Algorithm~\ref{alg1} solves a sequence of the incremental problems and relaxed problems~(\ref{eq3a}). It is easily seen  that the algorithm converges. This fact follows from the observation that the size of $\mathcal{Y}$ increases by one at each step~\ref{alg1_9} of the algorithm. Indeed, suppose that $\pmb{y}^*$ is already present in $\mathcal{Y}$, so in formulation~(\ref{eq3a}) solved in
 step~\ref{alg1_6}. Then $LB=\textsc{Inc}(\pmb{x},\pmb{c}^*)=\pmb{c}^*\pmb{y}^*\geq \pmb{c}\pmb{y}^*$ for each $\pmb{c}\in \mathcal{U}$. In consequence $LB\geq t^*=UB$ and the algorithm terminates.

Notice that each relaxed problem~(\ref{eq3a}) is a  linear programming problem, which can be solved efficiently. Hence, the running time of the algorithm relies on the complexity of solving the incremental problem. For larger problems the algorithm may converge slowly. However, we can terminate it after a specified time is exceeded. In this case we get an upper bound on $\textsc{Eval}(\pmb{x})$. We can also improve the performance of the algorithm by choosing good initial cost scenario $\pmb{c}_0$ in 
step~\ref{alg1_2}. Such scenario will be proposed in Section~\ref{advlb}.

Finally, focus on the most complex \textsc{Rob-Rec} problem, which can be represented as the following program:
\begin{equation}
\label{miprec}
\begin{array}{llll}
	\min &  \pmb{C}\pmb{x}+\theta \\
		& \displaystyle \theta \geq \pmb{c}\pmb{y}_{\pmb{c}} & \pmb{c}\in \mathcal{U}\\
		& \displaystyle \sum_{i\in [n]} x_i (1-y_i^{\pmb{c}})\leq \alpha \sum_{i\in [n]} x_i & \pmb{c}\in \mathcal{U} \\
		& \pmb{x},\pmb{y}_{\pmb{c}}\in \mathcal{X} & \pmb{c}\in \mathcal{U}
\end{array}
\end{equation}

When $\mathcal{U}=\{\pmb{c}^1,\dots,\pmb{c}^r\}$ is explicitly given or  the uncertainty set can be replaced with its finite representation, for example by the extreme points of $\mathcal{U}$ (see, e.g.,~\cite{CG15b, ZZ13}), then~(\ref{miprec}) becomes a MIP formulation for \textsc{Rob-Rec}.  This formulation can have exponential number of variables and constraints and solving it requires special row and column generation techniques~\cite{ZZ13}. For the problem discussed in this paper the situation seems to be more complex, because it is difficult to replace $\mathcal{U}$ with its finite equivalent representation. In particular, we cannot replace $\mathcal{U}$ with the set of its extreme points. We will demonstrate this fact using the following simple example. Let $\mathcal{X}=\{(x_1,x_2)\in \{0,1\}^2: x_1+x_2=1\}$ and $\mathcal{U}=\{(0+\delta_1,0+\delta_2):  \delta_1,\delta_2\in [0,1], \delta_1+\delta_2\leq 1\}$. When $\pmb{C}=\pmb{0}$ and $\alpha=1$, we get the following problem
$$\max_{(c_1,c_2)\in \mathcal{U}} \min_{(x_1,x_2)\in \mathcal{X}} c_1x_1+c_2x_2.$$
This problem has a unique solution $c_1=0.5$, $c_2=0.5$ with the objective value equal to~0.5. The set of extreme points of $\mathcal{U}$ is $\{(0,0), (0,1), (1,0)\}$ and for each of these points the objective value is~0.
In this paper we do not consider MIP formulation for \textsc{Rob-Rec}. Instead, we will use the formulations~(\ref{eq0}), (\ref{eq1}), (\ref{eq3a}) for the incremental, recoverable and evaluation problems to construct approximate solutions for \textsc{Rob-Rec}. 


\section{Lower bounds}
\label{seclb}

In this section we will propose several methods of computing a lower bound for the \textsc{Rob-Rec} problem. We will then use these lower bounds to evaluate the quality of the approximate solutions. We will denote by $opt$ the optimal objective value in \textsc{Rob-Rec}.

\subsection{Adversarial lower bound}
\label{advlb}

It is 
easy to check
 that for each cost scenario $\pmb{c}\in \mathcal{U}$, the value of $\textsc{Rec}(\pmb{c})$ is a lower bound on $opt$. In consequence, the following adversarial problem can provide us the first general lower bound:
$$\textsc{Adv}:\; \max_{\pmb{c}\in\mathcal{U}} \textsc{Rec}(\pmb{c})=\max_{\pmb{c}\in \mathcal{U}} \min_{(\pmb{x},\pmb{y})\in \mathcal{Z}} (\pmb{C}\pmb{x}+\pmb{c}\pmb{y}).$$
Let us rewrite this problem as follows:
\begin{equation}
\label{eq4}
\begin{array}{llll}
		\max & t\\
			& t\leq \pmb{C}\pmb{x}+\pmb{c}\pmb{y} & \forall (\pmb{x},\pmb{y}) \in \mathcal{Z}\\
			& \pmb{c}\in \mathcal{U}
	\end{array}
\end{equation}
In order to solve~(\ref{eq4}) we will use a similar technique as for the model~(\ref{eq3a}). The corresponding algorithm is shown in the form of Algorithm~\ref{alg2}.
\begin{algorithm}
\caption{Compute $\textsc{Adv}$ with accuracy $\epsilon$.} \label{alg2}
\begin{algorithmic}[1]
	\STATE $UB:=\infty$
	\STATE Choose an initial scenario $\pmb{c}_0\in \mathcal{U}$
	\STATE Solve $\textsc{Rec}(\pmb{c}_0)$ obtaining $(\pmb{x}^*,\pmb{y}^*)\in \mathcal{Z}$ and $LB:=\textsc{Rec}(\pmb{c}_0)$
	\STATE $\mathcal{Z}':=\{(\pmb{x}^*,\pmb{y}^*)\}$
	\WHILE{$\frac{UB-LB}{LB}>\epsilon$}
		\STATE Solve the formulation~(\ref{eq4}) with $\mathcal{Z}:=\mathcal{Z}'$ obtaining $(\pmb{c}^*,t^*)$ and $UB:=t^*$.
		\STATE Solve $\textsc{Rec}(\pmb{c}^*)$ obtaining $(\pmb{x}^*,\pmb{y}^*)\in \mathcal{Z}$
		\STATE \textbf{if} $LB<\textsc{Rec}(\pmb{c}^*)$ \textbf{then} $LB:=\textsc{Rec}(\pmb{c}^*)$
		\STATE $\mathcal{Z}':=\mathcal{Z}'\cup \{(\pmb{x}^*,\pmb{y}^*)\}$.
	\ENDWHILE
	\STATE \textbf{return} $LB$.
\end{algorithmic}
\end{algorithm}

Again, (\ref{eq4}) is a linear programming problem, which can be solved efficiently for a small subset $\mathcal{Z}'\subseteq \mathcal{Z}$. In order to prove that  Algorithm~\ref{alg2} converges, one can use the same argument as in Section~\ref{spmip}. The running time of Algorithm~\ref{alg2} depends on the complexity of solving the recoverable problem, which is not easy in general. However, we can fix a limit for its running time, after which we still get a lower bound on $opt$.

For larger problems the relaxation algorithm may converge slowly. In order to speed up the computations we can start with a good heuristic initial scenario $\pmb{c}_0\in \mathcal{U}$, computed as follows:
\begin{equation}
\label{mheus}
	\begin{array}{lll}
		\max  & v \\
			& \underline{c}_i+\delta_i\geq \min \{\underline{c}_i+d_i,v\}\\
			& \sum_{i\in [n]} \delta_i\leq \Gamma\\
			& 0\leq \delta_i\leq d_i & i\in [n] \\
			& \pmb{\delta}\in \mathcal{V}
	\end{array}
\end{equation}
The idea of~(\ref{mheus}) is to uniformly distribute the budget among the smallest costs. This problem can be solved efficiently by using a binary search on $[0,V]$, where $V=\max_{i\in [n]} \{\underline{c}_i+d_i\}$.  An illustration for the uncertainty set $\mathcal{U}_0$ is shown in  Figure~\ref{figis}. In this case, given $v\geq 0$, we fix 
$\delta_i=\max\{0,\min\{d_i,v-\underline{c}_i\}\}$. We choose the maximum value of $v$ for which the constraint $\sum_{i\in [n]} \delta_i\leq \Gamma$ is satisfied. 
\begin{figure}[ht]
	\centering
	\includegraphics[height=6cm]{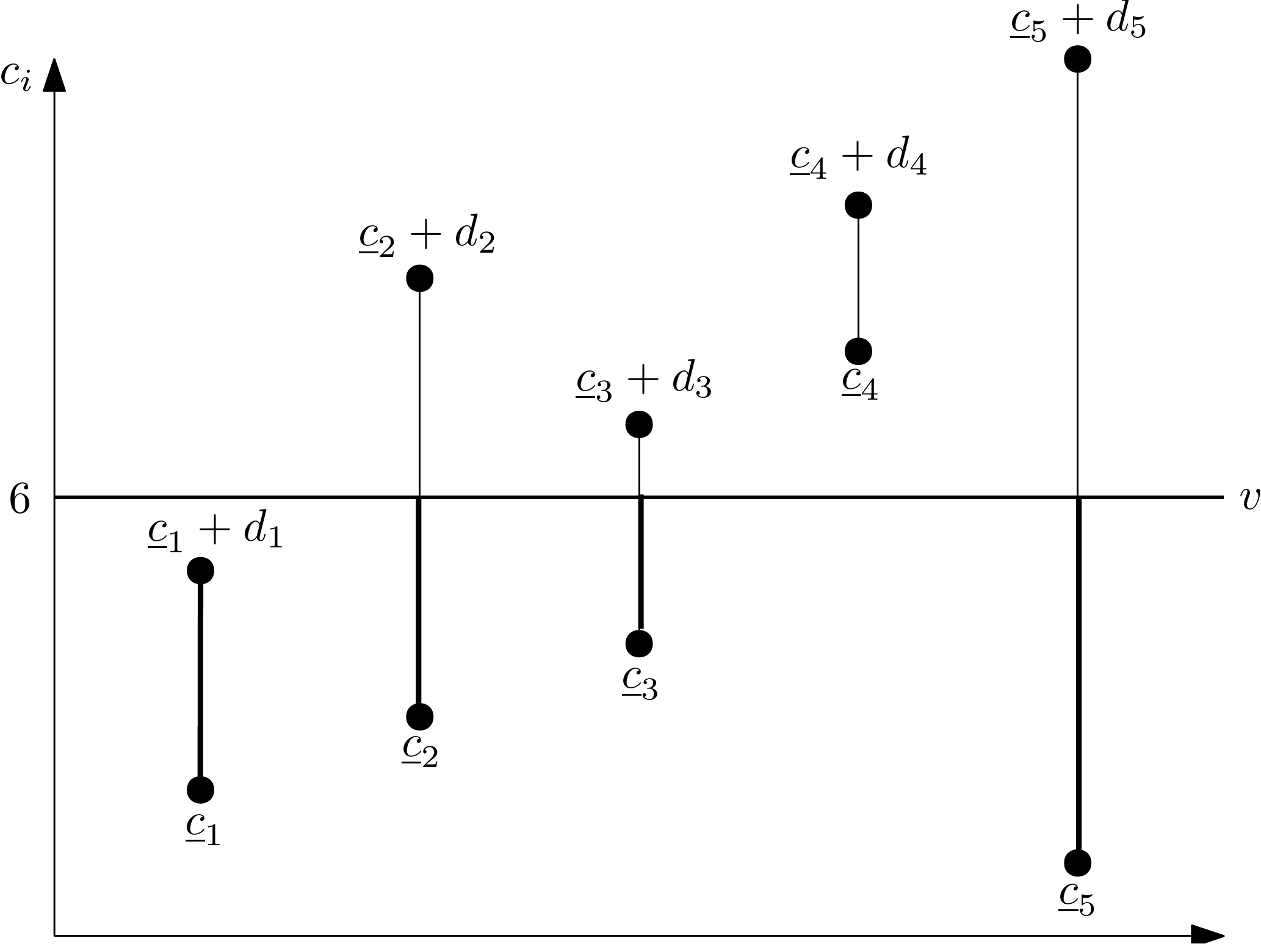}
	\caption{Computing the initial scenario $\pmb{c}_0$ for a sample uncertainty set $\mathcal{U}_0$ with $\Gamma=13$.}\label{figis}
\end{figure}

For large problems we can use $\textsc{Rec}(\pmb{c}_0)$ as a starting lower bound on $opt$. We can then try to improve the lower bound by running Algorithm~\ref{alg2} for a given time limit.

The tightness of the adversarial lower bound is likely to depend on $\alpha$. Observe that when $\alpha=1$, then $\mathcal{X}^\alpha_{\pmb{x}}=\mathcal{X}$ and  $\textsc{Rob-Rec}$ is equivalent to $\textsc{Adv}$. Also, for $\alpha$ close to~1, the adversarial lower bound should be closer to $opt$.
	
\subsection{Cardinality selection constraint  lower bound}
\label{sellb}

In this section we will propose another lower bound which, contrary to the adversarial lower bound, can be computed by solving one MIP formulation. This lower bound should behave better than the adversarial lower bound for smaller values of $\alpha$. In order to simplify the presentation we will use the uncertainty set $\mathcal{U}_0$. A generalization to any set $\mathcal{U}$ will be straightforward.
 The idea will be to relax the incremental problem by relaxing the structure of the neighborhood.  We consider first the case when $\mathcal{P}$ is an equal cardinality problem.
Let us replace the constraint $\pmb{y}\in\mathcal{X}$ in (\ref{eq1}a) with a weaker 
 \emph{cardinality (selection) constraint}, namely $y_1+\dots+y_n=m$, $m\in [n]$. So, the second stage solution need not to be feasible. Only, the cardinality  constraint must be satisfied. As the result, we get the following relaxation of the incremental problem:
\begin{equation}
\label{sel1}
\begin{array}{llll}
		\textsc{Inc}(\pmb{x},\pmb{c})\geq \textsc{Inc}'(\pmb{x},\pmb{c})=\min & \pmb{c}\pmb{y} \\
			&\displaystyle \sum_{i\in [n]} x_iy_i\geq  \ell\\
			& \displaystyle \sum_{i\in [n]} y_i=m \\
			& y_i\in \{0,1\} & i\in [n]
	\end{array}
\end{equation}
Since $\ell$ is integer and $\pmb{x}\in \{0,1\}^n$ is fixed, we get the following equivalent problem:
\begin{equation}
\label{sel2}
\begin{array}{llll}
		\textsc{Inc}'(\pmb{x},\pmb{c})=\min & \pmb{c}\pmb{y}\\
			&\displaystyle \sum_{i\in [n]} x_iy_i\geq \ell\\
			&\displaystyle \sum_{i\in [n]} y_i =m\\
			&0\leq y_i\leq 1  & i\in [n]
	\end{array}
\end{equation}
The problems~(\ref{sel1}) and~(\ref{sel2}) have the same optimal objective values, which follows from the fact that, given $\pmb{x}$, the constraint matrix of~(\ref{sel2}) is totally unimodular.
Taking the dual to~(\ref{sel2}) we get
$$
\begin{array}{llll}
		\textsc{Inc}'(\pmb{x},\pmb{c})=\max & \displaystyle \sigma \ell + \rho m-\sum_{i\in [n]} \alpha_i \\
			&\sigma x_i+\rho-\alpha_i\leq c_i & i\in [n]\\
			& \sigma \geq 0
	\end{array}
	$$
Now the relaxed evaluation problem $\textsc{Eval}'(\pmb{x})=\max_{\pmb{c}\in \mathcal{U}_0} \textsc{Inc}'(\pmb{x},\pmb{c})$ can be formulated as follows
$$
\begin{array}{llll}
		\textsc{Eval}(\pmb{x})\geq \textsc{Eval}'(\pmb{x})=\displaystyle \pmb{C}\pmb{x}+\max & \displaystyle \sigma \ell + \rho m-\sum_{i\in [n]} \alpha_i \\
			&\sigma x_i+\rho-\alpha_i \leq \underline{c}_i+\delta_i & i\in [n]\\
			& \displaystyle \sum_{i\in [n]} \delta_i\leq \Gamma \\
			& \delta_i\leq d_i & i\in [n] \\
			& \delta_i\geq 0 & i\in [n]\\
			& \sigma \geq 0
	\end{array}
	$$
Taking the dual to the inner maximization problem we get:
$$
\begin{array}{llll}
		\textsc{Eval}'(\pmb{x})=\pmb{C}\pmb{x}+\min & \displaystyle \pi\Gamma+\sum_{i\in [n]} \underline{c}_iy_i+\sum_{i\in [n]} u_i d_i \\
			& \displaystyle \sum_{i\in [n]} x_i y_i \geq \ell \\
			& \displaystyle \sum_{i\in [n]} y_i=m \\
			& y_i\leq 1 & i\in [n] \\
			& -y_i+\pi+u_i\geq 0 & i\in [n]\\
			& \pi\geq 0 \\
			& u_i, y_i\geq 0 & i\in[n]\\
	\end{array}
	$$
Finally, we obtain
\begin{equation}
\label{selfin0}
\begin{array}{llll}
		\displaystyle \min_{\pmb{x}\in \mathcal{X}} \textsc{Eval}'(\pmb{x})= \min & \displaystyle \pmb{C}\pmb{x}+\pi\Gamma+\sum_{i\in [n]} \underline{c}_iy_i+\sum_{i\in [n]} u_i d_i \\
			& \displaystyle \sum_{i\in [n]} x_i y_i \geq \ell \\
			& \displaystyle \sum_{i\in [n]} y_i=m \\
			& y_i\leq 1 & i\in [n] \\
			& -y_i+\pi+u_i\geq 0 & i\in [n]\\
			& \pi\geq 0 \\
			& \pmb{x}\in \mathcal{X}\\
			& u_i, y_i\geq 0 & i\in[n]\\
	\end{array}
\end{equation}
Since
 $$opt=\min_{\pmb{x}\in \mathcal{X}} \textsc{Eval}(\pmb{x})\geq \min_{\pmb{x}\in \mathcal{X}} \textsc{Eval}'(\pmb{x}),$$ the MIP formulation~(\ref{selfin0}) gives us a lower bound on $opt$. Notice that the constraint $\sum_{i\in [n]} x_iy_i\geq \ell$ can be easily linearized by using standard techniques.

Let us now turn to the element exclusion neighborhood. Let us remove the constraint $\pmb{y}\in\mathcal{X}$ in (\ref{eq0}a) and rewrite this problem as follows:
\begin{equation}
\label{esel0}
 \begin{array}{llll}
		\textsc{Inc}(\pmb{x},\pmb{c})\geq \textsc{Inc}'(\pmb{x},\pmb{c})=\min & \pmb{c}\pmb{y} \\
			& \displaystyle \sum_{i\in [n]} x_i(1-y_i)  \leq \alpha \sum_{i\in [n]} x_i \\
			& y_i\in \{0,1\} & i\in [n]
	\end{array}
\end{equation}
which can be rewritten as
\begin{equation}
\label{esel00}
 \begin{array}{llll}
		\textsc{Inc}'(\pmb{x},\pmb{c})=\min & \pmb{c}\pmb{y} \\
			& \displaystyle \sum_{i\in [n]} x_iy_i  \geq (1-\alpha) \sum_{i\in [n]} x_i \\
			& y_i\in \{0,1\} & i\in [n]
	\end{array}
\end{equation}
We cannot add the cardinality constraint, since the size of $I(\pmb{y})$ is unknown. Also, the right hand side of the constraint need not to be integral. However, we can still obtain a lower bound on the incremental problem by solving the following relaxation of~(\ref{esel0}):
$$
 \begin{array}{llll}
		 \textsc{Inc}(\pmb{x},\pmb{c})\geq\textsc{Inc}''(\pmb{x},\pmb{c})=\min & \pmb{c}\pmb{y} \\
			& \displaystyle \sum_{i\in [n]} x_iy_i  \geq (1-\alpha) \sum_{i\in [n]} x_i \\
			& y_i\leq 1 & i\in [n]\\
			& y_i\geq 0 & i\in [n]
	\end{array}
$$
Using similar reasoning as for the equal cardinality problem, we get the following MIP formulation:
\begin{equation}
\label{selfin1}
\begin{array}{llll}
		\displaystyle \min_{\pmb{x}\in \mathcal{X}} \textsc{Eval}''(\pmb{x})= \min & \displaystyle \pmb{C}\pmb{x}+\pi\Gamma+\sum_{i\in [n]} c_iy_i+\sum_{i\in [n]} u_i d_i \\
			& \displaystyle \sum_{i\in [n]} x_iy_i  \geq (1-\alpha) \sum_{i\in [n]} x_i\\
			& y_i\leq 1 & i\in [n] \\
			& -y_i+\pi+u_i\geq 0 & i\in [n]\\
			& \pi\geq 0 \\
			& \pmb{x}\in \mathcal{X}\\
			& u_i, y_i\geq 0 & i\in[n]\\
	\end{array}
\end{equation}
The formulation~(\ref{selfin1}) is a lower bound on $opt$ for the element exclusion neighborhood. The terms $x_iy_i$ in~(\ref{selfin1}) can be linearized by using standard techniques. 

Observe that for the equal cardinality problem $\mathcal{P}$ the cardinality selection constraint lower bound  is equal to~$opt$ when $\alpha=0$, because $\mathcal{X}^0_{\pmb{x}}=\{\pmb{x}\}$. The same property is true for the element exclusion neighborhood under the additional assumption that $\underline{\pmb{c}}>\pmb{0}$. In this case $\mathcal{X}^0_{\pmb{x}}$ contains all solution $\pmb{y}$ such that $I(\pmb{y})$ is a superset of $I(\pmb{x})$. Hence the cardinality selection constraint lower bound can behave better for $\alpha$ close to~0. 

\subsection{Lagrangian lower bound}
\label{laglb}

In this section we will construct another lower bound, which will be based on the Lagrangian relaxation technique (see, e.g.,~\cite{AMO93}). Contrary to the adversarial and selection lower bounds, this bound will be limited to a special class of problems. Namely, we will make two assumptions about the underlying problem $\mathcal{P}$. Firstly, we will assume that $\mathcal{X}=\{\pmb{x}\in \{0,1\}^n\,:\,\pmb{A}\pmb{x}=\pmb{b}\}$, where $\pmb{A}$ is an $m\times n$ matrix, and  the corresponding polyhedron $P_{\mathcal{X}}=\{\pmb{x}\,:\,\pmb{0}\leq \pmb{x} \leq \pmb{1}, \pmb{A}\pmb{x}=\pmb{b}\}$ is integral. This is true, for example, when $\pmb{A}$ is a totally unimodular matrix. We will also assume that $\mathcal{P}$ has the equal cardinality property, which will allow us to use the simplified neighborhood representation. An important  problem, which satisfies both assumptions,  is the minimum assignment. Again, we will consider the uncertainty set $\mathcal{U}_0$, as the generalization to any $\mathcal{U}$ is straightforward.

Let us introduce a Lagrangian multiplier $\mu\geq 0$ and consider the following Lagrangian relaxation of the incremental problem~(\ref{eq1}a):
\begin{equation}
\label{lag1}
\begin{array}{llll}
		\textsc{Inc}(\pmb{x},\pmb{c})\geq \textsc{Inc}'(\pmb{x},\pmb{c},\mu)=\min & \displaystyle \pmb{c}\pmb{y}+\mu(\ell- \sum_{i\in [n]} x_iy_i)\\
			& \pmb{A}\pmb{y}=\pmb{b}\\
			& \pmb{y}\in\{0,1\}^n
	\end{array}
\end{equation}
By the integrality property,~(\ref{lag1}) is equivalent to the following linear programming problem:
\begin{equation}
\label{lag2}
\begin{array}{llll}
		\textsc{Inc}'(\pmb{x}, \pmb{c},\mu)=\min & \displaystyle \sum_{i\in [n]} (c_i-\mu x_i)y_i+\mu\ell\\\
			& \pmb{A}\pmb{y}=\pmb{b}\\
			& y_i\leq 1 & i\in [n]\\
			& y_i\geq 0 & i\in [n]
	\end{array}
\end{equation}
Dualizing~(\ref{lag2}) yields
$$\begin{array}{llll}
		\textsc{Inc}'(\pmb{x},\pmb{c},\mu)=\max & \displaystyle \sum_{i\in [m]} \gamma_i b_i-\sum_{i\in [n]} \beta_i+\mu\ell\\\
			& \pmb{\gamma} \pmb{A}_i-\beta_i \leq c_i-\mu x_i & i\in [n] \\
			& \beta_i\geq 0 & i\in [n],
	\end{array}
	$$
where $\pmb{A}_i$ is the $i$th column of matrix $\pmb{A}$ and $\pmb{\gamma}=[\gamma_1,\dots,\gamma_m]$.
Now, given $\pmb{x}\in \mathcal{X}$, we get the following LP relaxation of the evaluation problem:
$$\begin{array}{llll}
		\textsc{Eval}(\pmb{x})\geq \textsc{Eval}'(\pmb{x},\mu)=\pmb{C}\pmb{x}+\max & \displaystyle \sum_{i\in [m]} \gamma_i b_i-\sum_{i\in [n]} \beta_i+\mu\ell\\\
			& \pmb{\gamma} \pmb{A}_i-\beta_i \leq \underline{c}_i+\delta_i-\mu x_i & i\in [n] \\
			& \displaystyle \sum_{i\in [n]} \delta_i\leq \Gamma \\
			& \delta_i\leq d_i & i\in [n]\\
			& \delta_i,\beta_i\geq 0 & i\in [n]
	\end{array}
	$$
After dualizing the inner maximization problem, we get the following equivalent formulation
$$\begin{array}{llll}
		\textsc{Eval}'(\pmb{x},\mu)=\pmb{C}\pmb{x}+\min & \displaystyle \pi\Gamma +\sum_{i\in [n]} y_i(\underline{c}_i-\mu x_i) +\sum_{i\in [n]} u_i d_i +\mu \ell \\
			& \pmb{A}\pmb{y}=\pmb{b} \\
			& -y_i+\pi+u_i \geq 0\\
			& 0\leq y_i \leq 1 & i\in [n]\\
			& u_i\geq 0 & i\in [n]
	\end{array}
	$$
Finally, we have
\begin{equation}
\label{lagfin}
\begin{array}{llll}
		\displaystyle \min_{\pmb{x}\in \mathcal{X}} \textsc{Eval}'(\pmb{x},\mu)=\min & \pmb{C}\pmb{x}+\displaystyle \pi\Gamma +\sum_{i\in [n]} y_i(\underline{c}_i-\mu x_i) +\sum_{i \in [n]} u_i d_i +\mu \ell \\
			& \pmb{A}\pmb{y}=\pmb{b} \\
			& -y_i+\pi+u_i \geq 0\\
			& y_i \leq 1 & i\in [n]\\
			& \pmb{A}\pmb{x}=\pmb{b} \\
			& u_i, y_i \geq 0 & i\in [n]\\
			& \pmb{x}\in \{0,1\}^n
	\end{array}
\end{equation}
The formulation~(\ref{lagfin}) can be linearized, which results in the following linear MIP model:
\begin{equation}
\label{lagfin1}
\begin{array}{llll}
		\displaystyle \min_{\pmb{x}\in \mathcal{X}} \textsc{Eval}'(\pmb{x},\mu)=\min & \pmb{C}\pmb{x}+\displaystyle \pi\Gamma +\sum_{i\in [n]} y_i\underline{c}_i- \mu \sum_{i\in [n]} z_i+\sum_{i\in [n]} u_i d_i +\mu \ell \\
			& \pmb{A}\pmb{y}=\pmb{b} \\
			& -y_i+\pi+u_i \geq 0\\
			& y_i \leq 1 & i\in [n]\\
			& \pmb{A}\pmb{x}=\pmb{b} \\
			& z_i\leq x_i & i\in [n] \\
			& z_i\leq y_i & i\in [n]\\
			& u_i, y_i,z_i \geq 0 & i\in [n]\\
			& \pmb{x}\in \{0,1\}^n
	\end{array}
\end{equation}
Observe that~(\ref{lagfin1}) has only $n$ binary variables.
Let us denote
$$\textsc{Eval}^*(\mu)= \min_{\pmb{x}\in \mathcal{X}} \textsc{Eval'}(\pmb{x},\mu)$$
Hence, for each $\mu\geq 0$, we get a lower bound on $opt$. The best lower bound can be computed by solving the following problem:
\begin{equation}
\label{lagprob}
\max_{\mu\geq 0} \textsc{Eval}^*(\mu).
\end{equation}
The problem~(\ref{lagprob}) can be solved by applying a search method on the single parameter $\mu \geq 0$. One can also solve one problem~(\ref{lagfin1}) for a heuristically chosen value of~$\mu$, also obtaining a lower bound on $opt$ (but possibly not the most tight one).

\section{Upper bounds and approximate solutions}
\label{secub}

As we  make no assumptions on the underlying problem $\mathcal{P}$, no general polynomial time approximation algorithm can exist for any problem discussed in this paper. In this section we will explore the approximability of the robust recoverable problem, under the assumption that we can solve the incremental and recoverable problems in reasonable time. In general, these problems cannot be solved in polynomial time. However, good modern solvers can solve them to optimality for quite large instances (see Section~\ref{secexp}). As in  Section~\ref{seclb}, we will use $opt$ to denote the optimal objective value for \textsc{Rob-Rec}.

By exchanging the inner max-min operators in \textsc{Rob-Rec}, we get the following \emph{min-max problem}:
$$\textsc{Min-Max}:\; \min_{(\pmb{x},\pmb{y})\in \mathcal{Z}} \max_{\pmb{c}\in \mathcal{U}} (\pmb{C}\pmb{x}+\pmb{c}\pmb{y})=\min_{(\pmb{x},\pmb{y})\in \mathcal{Z}} (\pmb{C}\pmb{x}+\max_{\pmb{c}\in \mathcal{U}} \pmb{c}\pmb{y}).$$
By using well known min-max relations, the optimal objective value of \textsc{Min-Max} is an upper bound on $opt$. Because $\mathcal{U}_0\subseteq \mathcal{U}$, we conclude that
$$UB=\min_{(\pmb{x},\pmb{y})\in \mathcal{Z}} (\pmb{C}\pmb{x}+\max_{\pmb{c}\in \mathcal{U}_0} \pmb{c}\pmb{y})$$
is also an upper bound on $opt$.
 Given $\pmb{y}$, the inner problem $\max_{\pmb{c}\in \mathcal{U}_0} \pmb{c}\pmb{y}$ can easily be solved by allocating the largest possible part of the budget $\Gamma$ to the costs of the elements in $I(\pmb{y})$. Hence, either the whole budget is allocated or the allocation is blocked by the upper bounds on the second stage costs of $I(\pmb{y})$. We thus get
$$\max_{\pmb{c}\in \mathcal{U}_0} \pmb{c}\pmb{y}=\min\{\underline{\pmb{c}}\pmb{y}+\Gamma, (\underline{\pmb{c}}+\pmb{d})\pmb{y} \}.$$
In consequence
$$
UB= \min_{(\pmb{x},\pmb{y})\in \mathcal{Z}} (\pmb{C}\pmb{x}+\min\{\underline{\pmb{c}}\pmb{y}+\Gamma, (\underline{\pmb{c}}+\pmb{d})\pmb{y} \})=\min_{(\pmb{x},\pmb{y})\in \mathcal{Z}}\min \{\pmb{C}\pmb{x}+\underline{\pmb{c}}\pmb{y}+\Gamma, \pmb{C}\pmb{x}+(\underline{\pmb{c}}+\pmb{d})\pmb{y} \}=$$
$$
	=\min \{\textsc{Rec}(\underline{\pmb{c}})+\Gamma, \textsc{Rec}(\underline{\pmb{c}}+\pmb{d})\}.
$$
Hence, in order to compute $UB$, it is enough to solve two recoverable problems.  We now investigate the quality of the solutions $(\underline{\pmb{x}},\underline{\pmb{y}})\in \mathcal{Z}$ and $(\overline{\pmb{x}},\overline{\pmb{y}})\in \mathcal{Z}$  obtained by solving $\textsc{Rec}(\underline{\pmb{c}})$ and $\textsc{Rec}(\underline{\pmb{c}}+\pmb{d})$, respectively. We will choose better of $\underline{\pmb{x}}$ and $\overline{\pmb{x}}$ as an approximate first stage solution to \textsc{Rob-Rec}, i.e. we choose $\underline{\pmb{x}}$ if $\textsc{Eval}(\underline{\pmb{x}})\leq \textsc{Eval}(\overline{\pmb{x}})$ and $\overline{\pmb{x}}$, otherwise. Observe that 
$$\textsc{Eval}(\underline{\pmb{x}})\leq \pmb{C}\underline{\pmb{x}}+\max_{\pmb{c}\in \mathcal{U}_0}\pmb{c}\underline{\pmb{y}} \leq \pmb{C}\underline{\pmb{x}}+\underline{\pmb{c}}\underline{\pmb{y}}+\Gamma=\textsc{Rec}(\underline{\pmb{c}})+\Gamma$$
and 
$$\textsc{Eval}(\overline{\pmb{x}})\leq \pmb{C}\overline{\pmb{x}}+\max_{\pmb{c}\in \mathcal{U}_0} \pmb{c}\overline{\pmb{y}}\leq
\pmb{C}\overline{\pmb{x}}+(\underline{\pmb{c}}+\pmb{d})\overline{\pmb{y}}=\textsc{Rec}(\underline{\pmb{c}}+\pmb{d}).$$ 
Hence
\begin{equation}
\label{appr1}
\min\{\textsc{Eval}(\underline{\pmb{x}}),\textsc{Eval}(\overline{\pmb{x}})\}\leq UB.
\end{equation}
Since $\textsc{Rec}(\pmb{c})$ is a lower bound on $opt$ for each $\pmb{c}\in \mathcal{U}$,  we  get 
\begin{equation}
\label{appr01}
\frac{UB}{opt}\leq \frac{\min \{\textsc{Rec}(\underline{\pmb{c}})+\Gamma, \textsc{Rec}(\underline{\pmb{c}}+\pmb{d})\}}{\textsc{Rec}(\pmb{c})}=\rho(\pmb{c}),\;\; \forall \pmb{c}\in \mathcal{U}.
\end{equation}
 Notice that we get the smallest ratio $\rho^*=\min_{\pmb{c}\in \mathcal{U}} \rho(\pmb{c})$ by choosing $\pmb{c}\in \mathcal{U}$ maximizing $\textsc{Rec}(\pmb{c})$, i.e. by solving the adversarial problem discussed in Section~\ref{advlb}. Using~(\ref{appr1}) we get
$$\min\{\textsc{Eval}(\underline{\pmb{x}}),\textsc{Eval}(\underline{\pmb{x}})\} \leq \rho^*\cdot opt,$$ 
so the better of solutions $\underline{\pmb{x}}$ and $\overline{\pmb{x}}$ has an approximation ratio of~$\rho^*$.

The value of $\rho^*$ depends on the problem data. Furthermore, its precise evaluation requires solving recoverable and adversarial problems, which can be time consuming.
We now show several estimations of the ratio $\rho^*$ from above, which can be computed more efficiently. We will use the following lemma:
\begin{lem}
Let $(\pmb{x}^*, \pmb{y}^*)\in \mathcal{Z}$ be an optimal solution to $\textsc{Rec}(\pmb{c}_0)$ for a fixed scenario $\pmb{c}_0\in \mathcal{U}$. We then have
\begin{equation}
\label{appr2}
\rho^*\leq  \min \left\{\frac{\pmb{C}\pmb{x}^*+\underline{\pmb{c}}\pmb{y}^*+\Gamma}{\pmb{C} \pmb{x}^*+ \pmb{c}_0\pmb{y}^*}, \frac{\pmb{C}\pmb{x}^*+(\underline{\pmb{c}}+\pmb{d})\pmb{y}^*}{\pmb{C} \pmb{x}^*+ \pmb{c}_0\pmb{y}^*}\right\}.
\end{equation}
\end{lem}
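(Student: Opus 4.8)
The plan is to exploit that $\rho^{*}$ is a minimum over all scenarios, so it suffices to control $\rho(\pmb{c}_0)$ for the one scenario $\pmb{c}_0$ that defines $(\pmb{x}^{*},\pmb{y}^{*})$, and then to estimate the numerator and the denominator of $\rho(\pmb{c}_0)$ separately using that fixed feasible pair. Concretely, since $\pmb{c}_0\in\mathcal{U}$ we immediately have $\rho^{*}=\min_{\pmb{c}\in\mathcal{U}}\rho(\pmb{c})\leq\rho(\pmb{c}_0)=\dfrac{\min\{\textsc{Rec}(\underline{\pmb{c}})+\Gamma,\ \textsc{Rec}(\underline{\pmb{c}}+\pmb{d})\}}{\textsc{Rec}(\pmb{c}_0)}$.

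For the denominator I would use optimality of $(\pmb{x}^{*},\pmb{y}^{*})$ for $\textsc{Rec}(\pmb{c}_0)$, which gives the exact identity $\textsc{Rec}(\pmb{c}_0)=\pmb{C}\pmb{x}^{*}+\pmb{c}_0\pmb{y}^{*}$. For the numerator I would use only feasibility: $(\pmb{x}^{*},\pmb{y}^{*})\in\mathcal{Z}$ is a (generally non-optimal) feasible pair for both $\textsc{Rec}(\underline{\pmb{c}})$ and $\textsc{Rec}(\underline{\pmb{c}}+\pmb{d})$, hence $\textsc{Rec}(\underline{\pmb{c}})\leq\pmb{C}\pmb{x}^{*}+\underline{\pmb{c}}\pmb{y}^{*}$ and $\textsc{Rec}(\underline{\pmb{c}}+\pmb{d})\leq\pmb{C}\pmb{x}^{*}+(\underline{\pmb{c}}+\pmb{d})\pmb{y}^{*}$. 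Adding $\Gamma$ to the first inequality and taking the minimum of the two right-hand sides yields $\min\{\textsc{Rec}(\underline{\pmb{c}})+\Gamma,\ \textsc{Rec}(\underline{\pmb{c}}+\pmb{d})\}\leq\min\{\pmb{C}\pmb{x}^{*}+\underline{\pmb{c}}\pmb{y}^{*}+\Gamma,\ \pmb{C}\pmb{x}^{*}+(\underline{\pmb{c}}+\pmb{d})\pmb{y}^{*}\}$.

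Combining these two estimates, and dividing by $\textsc{Rec}(\pmb{c}_0)=\pmb{C}\pmb{x}^{*}+\pmb{c}_0\pmb{y}^{*}$, which is strictly positive (costs are nonnegative and this quantity is already assumed positive wherever $\rho$ is used), the inequality direction is preserved and we obtain exactly $\rho^{*}\leq\min\Bigl\{\tfrac{\pmb{C}\pmb{x}^{*}+\underline{\pmb{c}}\pmb{y}^{*}+\Gamma}{\pmb{C}\pmb{x}^{*}+\pmb{c}_0\pmb{y}^{*}},\ \tfrac{\pmb{C}\pmb{x}^{*}+(\underline{\pmb{c}}+\pmb{d})\pmb{y}^{*}}{\pmb{C}\pmb{x}^{*}+\pmb{c}_0\pmb{y}^{*}}\Bigr\}$, since the right-hand side of~(\ref{appr2}) is precisely that minimum rewritten with the common denominator $\pmb{C}\pmb{x}^{*}+\pmb{c}_0\pmb{y}^{*}$.

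There is no genuine obstacle here: the argument is a short chain of inequalities, one from the definition of $\rho^{*}$, one from optimality (denominator), and one from feasibility (numerator). The only point deserving a remark is the positivity of the common denominator, needed so that dividing preserves the inequality; this is harmless because it coincides with $\textsc{Rec}(\pmb{c}_0)$, whose positivity is implicit in the very definition of the ratios $\rho(\pmb{c})$.
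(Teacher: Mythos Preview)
Your proof is correct and follows essentially the same route as the paper: bound $\rho^{*}$ by $\rho(\pmb{c}_0)$ via its definition, identify the denominator $\textsc{Rec}(\pmb{c}_0)=\pmb{C}\pmb{x}^{*}+\pmb{c}_0\pmb{y}^{*}$ by optimality, and bound each term in the numerator by feasibility of $(\pmb{x}^{*},\pmb{y}^{*})$ for $\textsc{Rec}(\underline{\pmb{c}})$ and $\textsc{Rec}(\underline{\pmb{c}}+\pmb{d})$. Your extra remark on the positivity of the denominator is not made explicit in the paper but is a sensible clarification.
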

\begin{proof}
	Using~(\ref{appr01}), we get
	$$\rho^*\leq \rho(\pmb{c}_0)=\min\left\{ \frac{\textsc{Rec}(\underline{\pmb{c}})+\Gamma}{\text{Rec}(\pmb{c}_0)}, \frac{\textsc{Rec}(\underline{\pmb{c}}+\pmb{d})}{\textsc{Rec}(\pmb{c}_0)}\right\}$$
	Now~(\ref{appr2}) follows from the inequalities $\textsc{Rec}(\underline{\pmb{c}})\leq \pmb{C}\pmb{x}^*+\underline{\pmb{c}}\pmb{y}^*$, $\textsc{Rec}(\underline{\pmb{c}}+\pmb{d})\leq \pmb{C}\pmb{x}^*+(\underline{\pmb{c}}+\pmb{d})\pmb{y}^*$ and equality $\textsc{Rec}(\pmb{c}_0)=\pmb{C} \pmb{x}^*+ \pmb{c}_0\pmb{y}^*$.
\end{proof}
\begin{lem}
\label{lemappr3a}
	If $\frac{\underline{c}_i+d_i}{\underline{c}_i}\leq \sigma $, for each $i\in [n]$, then $\rho^*\leq \sigma$
\end{lem}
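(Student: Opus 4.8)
The plan is to bound $\rho^*$ from above by exhibiting a single convenient scenario $\pmb{c}_0\in\mathcal{U}$ and applying the previous lemma (inequality~(\ref{appr2})). The natural choice here is the nominal scenario $\pmb{c}_0=\underline{\pmb{c}}$, which indeed lies in $\mathcal{U}$ since $\pmb{0}\in\mathcal{V}$. Let $(\pmb{x}^*,\pmb{y}^*)\in\mathcal{Z}$ be an optimal solution to $\textsc{Rec}(\underline{\pmb{c}})$, so that $\textsc{Rec}(\pmb{c}_0)=\pmb{C}\pmb{x}^*+\underline{\pmb{c}}\pmb{y}^*$.

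Next I would use the second term in the minimum of~(\ref{appr2}). With $\pmb{c}_0=\underline{\pmb{c}}$ we get
\begin{equation}
\label{eq:rhostar_plan}
\rho^*\leq \frac{\pmb{C}\pmb{x}^*+(\underline{\pmb{c}}+\pmb{d})\pmb{y}^*}{\pmb{C}\pmb{x}^*+\underline{\pmb{c}}\pmb{y}^*}.
\end{equation}
The hypothesis $\underline{c}_i+d_i\leq\sigma\,\underline{c}_i$ for every $i\in[n]$, together with $\pmb{y}^*\geq\pmb{0}$, gives $(\underline{c}_i+d_i)y^*_i\leq \sigma\,\underline{c}_i y^*_i$ componentwise, hence $(\underline{\pmb{c}}+\pmb{d})\pmb{y}^*\leq\sigma\,\underline{\pmb{c}}\pmb{y}^*$. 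Using also $\pmb{C}\geq\pmb{0}$ and $\sigma\geq 1$ (which follows from the hypothesis and $\underline{\pmb{c}}\geq\pmb{0}$, at least whenever some $\underline c_i>0$), we obtain $\pmb{C}\pmb{x}^*+(\underline{\pmb{c}}+\pmb{d})\pmb{y}^*\leq\sigma\pmb{C}\pmb{x}^*+\sigma\,\underline{\pmb{c}}\pmb{y}^*=\sigma(\pmb{C}\pmb{x}^*+\underline{\pmb{c}}\pmb{y}^*)$, so the right-hand side of~(\ref{eq:rhostar_plan}) is at most $\sigma$, giving $\rho^*\leq\sigma$.

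The only genuinely delicate point is the degenerate case where the denominator $\pmb{C}\pmb{x}^*+\underline{\pmb{c}}\pmb{y}^*=\textsc{Rec}(\underline{\pmb{c}})$ vanishes, i.e. when $opt$ could be $0$; there the ratio is ill-defined and one should argue separately (for instance, if $\textsc{Rec}(\underline{\pmb{c}})=0$ then $\underline{\pmb{c}}\pmb{y}^*=0$, so by the hypothesis $(\underline{\pmb{c}}+\pmb{d})\pmb{y}^*=0$ as well, and the approximate solution is already optimal). Apart from that, the argument is just the componentwise scaling estimate above, so I do not anticipate any real obstacle; the main thing to get right is invoking~(\ref{appr2}) with the correct scenario and noting $\sigma\geq 1$ so that the constant-cost term $\pmb{C}\pmb{x}^*$ is also absorbed.
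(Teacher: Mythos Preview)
Your proof is correct and follows essentially the same approach as the paper: choose $\pmb{c}_0=\underline{\pmb{c}}$, apply the second term of~(\ref{appr2}), and use $(\underline{\pmb{c}}+\pmb{d})\pmb{y}^*\leq\sigma\,\underline{\pmb{c}}\pmb{y}^*$ together with $\sigma\geq 1$. The paper writes the final estimate by shrinking the denominator (replacing $\underline{\pmb{c}}\underline{\pmb{y}}$ by $(1/\sigma)(\underline{\pmb{c}}+\pmb{d})\underline{\pmb{y}}$) rather than enlarging the numerator as you do, but this is the same computation, and your remark on the degenerate zero-denominator case is a harmless addition the paper omits.
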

\begin{proof}
	By setting $\pmb{c}_0=\underline{\pmb{c}}\in \mathcal{U}$ in~(\ref{appr2}), we get $(\pmb{x}^*,\pmb{y}^*)=(\underline{\pmb{x}},\underline{\pmb{y}})$ and
	$$\rho^*\leq \frac{\pmb{C}\underline{\pmb{x}}+(\underline{\pmb{c}}+\pmb{d})\underline{\pmb{y}}}{\pmb{C}\underline{\pmb{x}}+\underline{\pmb{c}}\underline{\pmb{y}}}\leq \frac{\pmb{C}\underline{\pmb{x}}+(\underline{\pmb{c}}+\pmb{d})\underline{\pmb{y}}}{\pmb{C}\underline{\pmb{x}}+(1/\sigma)(\underline{\pmb{c}}+\pmb{d})\underline{\pmb{y}}}\leq \frac{\pmb{C}\underline{\pmb{x}}+(\underline{\pmb{c}}+\pmb{d})\underline{\pmb{y}}}{(1/\sigma)(\pmb{C}\underline{\pmb{x}}+(\underline{\pmb{c}}+\pmb{d})\underline{\pmb{y}})}=\sigma,$$
	where the second inequality follows from $(\underline{\pmb{c}}+\pmb{d})\underline{\pmb{y}}\leq \sigma \underline{\pmb{c}}\underline{\pmb{y}}$ and the third inequality follows form the fact  that $\frac{1}{\sigma}\leq 1$.
\end{proof}
The value of $\sigma$ in Lemma~\ref{lemappr3a} can be interpreted as the maximal factor by which the second stage costs can increase. For example, when $\sigma=2$, then the second stage costs can increase by at most 100\% from their nominal values, and in this case $\rho^*\leq 2$. It is reasonable to assume that in many practical applications $\sigma$ is not large, which results in good approximation ratio.

\begin{lem}
\label{lemappr1}
	If 
	$$\Gamma \leq \beta\cdot \left(\min_{\pmb{x}\in \mathcal{X}} \pmb{C}\pmb{x}+\max_{\pmb{c}\in \mathcal{U}} \min_{\pmb{y}\in \mathcal{X}} \pmb{c}\pmb{y}\right)$$
	for $\beta \geq 0$ then $\rho^*\leq 1+\beta$.
\end{lem}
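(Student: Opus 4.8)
The goal is to bound $\rho^*$ when the budget $\Gamma$ is small relative to a natural "nominal robust" quantity. The plan is to start from the estimate $\rho^* \le \rho(\pmb{c})$ for every $\pmb{c}\in\mathcal{U}$ from~(\ref{appr01}), and in particular to use the branch of the minimum in $\rho(\pmb{c})$ corresponding to $\textsc{Rec}(\underline{\pmb{c}})+\Gamma$, namely
$$\rho^* \le \frac{\textsc{Rec}(\underline{\pmb{c}})+\Gamma}{\textsc{Rec}(\pmb{c})}$$
for any fixed $\pmb{c}\in\mathcal{U}$. First I would choose $\pmb{c}$ to be a maximizer of $\textsc{Rec}(\pmb{c})$ over $\mathcal{U}$, so the denominator becomes as large as possible; call this value $\textsc{Rec}(\pmb{c}^{\mathrm{adv}}) = \max_{\pmb{c}\in\mathcal{U}}\textsc{Rec}(\pmb{c})$. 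Then the right-hand side is $1 + \Gamma/\textsc{Rec}(\pmb{c}^{\mathrm{adv}})$, so it suffices to show
$$\Gamma \le \beta \cdot \textsc{Rec}(\pmb{c}^{\mathrm{adv}}) = \beta\cdot \max_{\pmb{c}\in\mathcal{U}}\ \min_{(\pmb{x},\pmb{y})\in\mathcal{Z}}(\pmb{C}\pmb{x}+\pmb{c}\pmb{y}).$$

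The remaining step is to compare the hypothesis's right-hand side with $\beta\cdot\textsc{Rec}(\pmb{c}^{\mathrm{adv}})$; it is enough to show
$$\min_{\pmb{x}\in\mathcal{X}}\pmb{C}\pmb{x} + \max_{\pmb{c}\in\mathcal{U}}\min_{\pmb{y}\in\mathcal{X}}\pmb{c}\pmb{y}\ \le\ \max_{\pmb{c}\in\mathcal{U}}\min_{(\pmb{x},\pmb{y})\in\mathcal{Z}}(\pmb{C}\pmb{x}+\pmb{c}\pmb{y}).$$
For this I would fix an adversarial $\pmb{c}^{\mathrm{adv}}$ attaining the maximum on the right, and bound the left-hand side term by term: first, $\min_{\pmb{x}\in\mathcal{X}}\pmb{C}\pmb{x} \le \pmb{C}\pmb{x}$ for the $\pmb{x}$-component of any feasible pair; second, $\max_{\pmb{c}\in\mathcal{U}}\min_{\pmb{y}\in\mathcal{X}}\pmb{c}\pmb{y}$ — note that this max is over the same set $\mathcal{U}$ but with the inner minimization over all of $\mathcal{X}$ rather than over the neighborhood $\mathcal{X}^\alpha_{\pmb{x}}\subseteq\mathcal{X}$, so $\min_{\pmb{y}\in\mathcal{X}}\pmb{c}\pmb{y} \le \min_{\pmb{y}\in\mathcal{X}^\alpha_{\pmb{x}}}\pmb{c}\pmb{y}$. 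Combining, for each $\pmb{x}\in\mathcal{X}$,
$$\min_{\pmb{x}'\in\mathcal{X}}\pmb{C}\pmb{x}' + \min_{\pmb{y}\in\mathcal{X}}\pmb{c}^{\mathrm{adv}}\pmb{y}\ \le\ \pmb{C}\pmb{x} + \min_{\pmb{y}\in\mathcal{X}^\alpha_{\pmb{x}}}\pmb{c}^{\mathrm{adv}}\pmb{y} = \pmb{C}\pmb{x}+\textsc{Inc}(\pmb{x},\pmb{c}^{\mathrm{adv}}),$$
and minimizing the right side over $\pmb{x}\in\mathcal{X}$ gives $\textsc{Rec}(\pmb{c}^{\mathrm{adv}})$, while the left side is exactly the parenthesized expression in the hypothesis (the two independent minimizations decouple). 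This yields the desired inequality and hence $\rho^* \le 1 + \Gamma/\textsc{Rec}(\pmb{c}^{\mathrm{adv}}) \le 1+\beta$.

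**Main obstacle.** The only subtlety is the term-by-term decoupling: one must check that $\min_{\pmb{x}\in\mathcal{X}}\pmb{C}\pmb{x} + \max_{\pmb{c}\in\mathcal{U}}\min_{\pmb{y}\in\mathcal{X}}\pmb{c}\pmb{y}$ is genuinely bounded above by $\max_{\pmb{c}\in\mathcal{U}}\min_{(\pmb{x},\pmb{y})\in\mathcal{Z}}(\pmb{C}\pmb{x}+\pmb{c}\pmb{y})$, i.e. that separating the $\pmb{x}$- and $\pmb{y}$-optimizations (which ignores the coupling constraint $\pmb{y}\in\mathcal{X}^\alpha_{\pmb{x}}$) only makes the value smaller. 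This is immediate once one fixes the outer-max $\pmb{c}^{\mathrm{adv}}$ on the right and observes that relaxing $\mathcal{X}^\alpha_{\pmb{x}}$ to $\mathcal{X}$ and dropping the dependence of the $\pmb{y}$-feasible-set on $\pmb{x}$ can only decrease the inner minimum; everything else is monotonicity of min/max. I expect no genuine difficulty here, only care with the direction of the inequalities and with the fact that $\textsc{Rec}(\pmb{c}^{\mathrm{adv}})>0$ (which holds since costs are nonnegative and, in the nontrivial case, the denominator in $\rho^*$ is positive).
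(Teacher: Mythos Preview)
Your proposal is correct and follows essentially the same route as the paper: choose the adversarial scenario $\pmb{c}^{\mathrm{adv}}$ maximizing $\textsc{Rec}(\pmb{c})$, bound $\rho^*\le 1+\Gamma/\textsc{Rec}(\pmb{c}^{\mathrm{adv}})$, and then show $\textsc{Rec}(\pmb{c}^{\mathrm{adv}})\ge \min_{\pmb{x}\in\mathcal{X}}\pmb{C}\pmb{x}+\max_{\pmb{c}\in\mathcal{U}}\min_{\pmb{y}\in\mathcal{X}}\pmb{c}\pmb{y}$ by relaxing $\mathcal{X}^\alpha_{\pmb{x}}$ to $\mathcal{X}$ and decoupling. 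One small slip: after substituting $\pmb{c}^{\mathrm{adv}}$ the right-hand side of~(\ref{appr01}) is $\frac{\textsc{Rec}(\underline{\pmb{c}})+\Gamma}{\textsc{Rec}(\pmb{c}^{\mathrm{adv}})}$, which is \emph{at most} (not equal to) $1+\Gamma/\textsc{Rec}(\pmb{c}^{\mathrm{adv}})$ because $\underline{\pmb{c}}\in\mathcal{U}$ gives $\textsc{Rec}(\underline{\pmb{c}})\le\textsc{Rec}(\pmb{c}^{\mathrm{adv}})$; just make that inequality explicit.
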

\begin{proof}
	Let $\pmb{c}_0=\pmb{c}^*\in \mathcal{U}$ be scenario maximizing $\textsc{Rec}(\pmb{c})$ and
	let $(\pmb{x}^*,\pmb{y}^*)\in \mathcal{Z}$ be an optimal solution to $\textsc{Rec}(\pmb{c}^*)$.
	Using the first term in the minimum in~(\ref{appr2}) we get
	$$\rho^*\leq 1+\frac{\Gamma}{\pmb{C}\pmb{x}^*+\pmb{c}^*\pmb{y}^*}.$$
	Since
	$$\textsc{Rec}(\pmb{c}^*)=\pmb{C}\pmb{x}^*+\pmb{c}^*\pmb{y}^*= \max_{\pmb{c}\in \mathcal{U}}\left(\min_{\pmb{x}\in \mathcal{X}}\pmb{C}\pmb{x}+\min_{\pmb{y}\in \mathcal{X}_{\pmb{x}}^\alpha}\pmb{c}\pmb{y}\right)\geq \max_{\pmb{c}\in \mathcal{U}}\left(\min_{\pmb{x}\in \mathcal{X}}\pmb{C}\pmb{x}+\min_{\pmb{y}\in \mathcal{X}}\pmb{c}\pmb{y}\right)=$$
	$$
	=\min_{\pmb{x}\in \mathcal{X}} \pmb{C}\pmb{x}+\max_{\pmb{c}\in\mathcal{U}}\min_{\pmb{y}\in \mathcal{X}} \pmb{c}\pmb{y}$$
	the lemma follows.
\end{proof}
Lemma~\ref{lemappr1} shows that $\rho^*$ is not large if the budged is not large in comparison with the first and second stage solution costs. The value of $\min_{\pmb{x}\in \mathcal{X}} \pmb{C}\pmb{x}$ can be computed in polynomial time if $\mathcal{P}$ is polynomially solvable. Also, the value of $\max_{\pmb{c}\in\mathcal{U}}\min_{\pmb{y}\in \mathcal{X}} \pmb{c}\pmb{y}$ can be sometimes computed efficiently by dualizing the inner minimization problem and solving a resulting LP formulation. 

The next two lemmas are valid only for the uncertainty set $\mathcal{U}_0$.
\begin{lem}
\label{lemappr2}
	Assume that the uncertainty set is $\mathcal{U}_0$. If $$\Gamma \geq \beta\cdot \sum_{i\in [n]} d_i=\beta\cdot D$$
	for $\beta \in (0,1]$, then $\rho^*\leq \frac{1}{\beta}$.
\end{lem}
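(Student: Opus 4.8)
The plan is to use the fact, noted in~(\ref{appr01}), that $\rho^*\le\rho(\pmb{c})$ for every $\pmb{c}\in\mathcal{U}$, and to exhibit one cleverly chosen scenario in $\mathcal{U}_0$ that forces $\textsc{Rec}(\pmb{c})$ to be large relative to the numerator of $\rho(\pmb{c})$. The natural candidate is the scenario obtained by scaling every deviation down by the factor $\beta$, namely $\pmb{c}_0=\underline{\pmb{c}}+\beta\pmb{d}$. First I would check that $\pmb{c}_0\in\mathcal{U}_0$: its deviation vector $\pmb{\delta}=\beta\pmb{d}$ satisfies $0\le\delta_i\le d_i$ because $\beta\in(0,1]$, and $\sum_{i\in[n]}\delta_i=\beta D\le\Gamma$ by the hypothesis, so $\pmb{c}_0$ is feasible.

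The heart of the argument is the elementary scaling inequality
\[
\beta\,\textsc{Rec}(\underline{\pmb{c}}+\pmb{d})\le\textsc{Rec}(\underline{\pmb{c}}+\beta\pmb{d}).
\]
To prove it I would take an optimal pair $(\pmb{x}^*,\pmb{y}^*)\in\mathcal{Z}$ for $\textsc{Rec}(\underline{\pmb{c}}+\beta\pmb{d})$, so that $\textsc{Rec}(\underline{\pmb{c}}+\beta\pmb{d})=\pmb{C}\pmb{x}^*+\underline{\pmb{c}}\pmb{y}^*+\beta\pmb{d}\pmb{y}^*$. Since $\pmb{C}\ge\pmb{0}$, $\underline{\pmb{c}}\ge\pmb{0}$ and $\beta\le1$, we have $\pmb{C}\pmb{x}^*\ge\beta\pmb{C}\pmb{x}^*$ and $\underline{\pmb{c}}\pmb{y}^*\ge\beta\underline{\pmb{c}}\pmb{y}^*$, hence $\pmb{C}\pmb{x}^*+\underline{\pmb{c}}\pmb{y}^*+\beta\pmb{d}\pmb{y}^*\ge\beta(\pmb{C}\pmb{x}^*+\underline{\pmb{c}}\pmb{y}^*+\pmb{d}\pmb{y}^*)$; and because $(\pmb{x}^*,\pmb{y}^*)$ is a feasible (though not necessarily optimal) pair for $\textsc{Rec}(\underline{\pmb{c}}+\pmb{d})$, the right-hand side is at least $\beta\,\textsc{Rec}(\underline{\pmb{c}}+\pmb{d})$, which is the claim.

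Finally I would assemble the pieces: bound the numerator of $\rho(\pmb{c}_0)$ from above by its second term $\textsc{Rec}(\underline{\pmb{c}}+\pmb{d})$, then apply the displayed inequality as $\textsc{Rec}(\underline{\pmb{c}}+\pmb{d})/\textsc{Rec}(\underline{\pmb{c}}+\beta\pmb{d})\le1/\beta$, obtaining
\[
\rho^*\le\rho(\pmb{c}_0)=\frac{\min\{\textsc{Rec}(\underline{\pmb{c}})+\Gamma,\ \textsc{Rec}(\underline{\pmb{c}}+\pmb{d})\}}{\textsc{Rec}(\underline{\pmb{c}}+\beta\pmb{d})}\le\frac{\textsc{Rec}(\underline{\pmb{c}}+\pmb{d})}{\textsc{Rec}(\underline{\pmb{c}}+\beta\pmb{d})}\le\frac{1}{\beta}.
\]
The main obstacle is really just spotting the right scenario $\underline{\pmb{c}}+\beta\pmb{d}$; once it is in place the rest reduces to the scaling inequality, which uses only nonnegativity of the cost vectors and $\beta\le1$. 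A minor point to dispose of is the degenerate case $\textsc{Rec}(\underline{\pmb{c}}+\beta\pmb{d})=0$, which is handled exactly as in the preceding lemmas.
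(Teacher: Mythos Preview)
Your proof is correct. Both you and the paper pick a scenario in $\mathcal{U}_0$ and bound $\rho^*$ via the second term of~(\ref{appr2}), but the choices differ: the paper uses $c'_i=\underline{c}_i+\min\{d_i,\Gamma d_i/D\}$, which spends the full budget proportionally, and then splits $I(\pmb{y}^*)$ into two index sets $I_1,I_2$ to push the estimate through. Your scenario $\underline{\pmb{c}}+\beta\pmb{d}$ spends only the fraction $\beta D\le\Gamma$ of the budget, and the resulting scaling inequality $\beta\,\textsc{Rec}(\underline{\pmb{c}}+\pmb{d})\le\textsc{Rec}(\underline{\pmb{c}}+\beta\pmb{d})$ gives the bound in one line with no case analysis. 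Your route is thus a bit more direct; the paper's scenario is slightly sharper (it uses all of $\Gamma$, hence gives a larger denominator), but that extra sharpness is not needed for the stated conclusion.
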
	
\begin{proof}
	Choose scenario $\pmb{c}'\in \mathcal{U}_0$ such that $c_i'=\min\{\underline{c}_i+d_i, \underline{c}_i+\Gamma \frac{d_i}{D}\}$ for $i\in [n]$. It is clear that $\pmb{c}'\in \mathcal{U}_0$, because $\sum_{i=1}^n \delta'_i\leq \sum_{i=1}^n \Gamma \frac{d_i}{D}=\Gamma$. Let $(\pmb{x}^*,\pmb{y}^*)\in \mathcal{Z}$ be an optimal solution to $\textsc{Rec}(\pmb{c}')$.
	 Consider the second term in the minimum in~(\ref{appr2}). Because $\pmb{C}\pmb{x}^*\geq 0$ and $(\underline{\pmb{c}}+\pmb{d})\pmb{y}^*\geq \pmb{c}'\pmb{y}^*$, we get the following estimation:
	$$\rho^*\leq \frac{(\underline{\pmb{c}}+\pmb{d})\pmb{y}^*}{\pmb{c}'\pmb{y}^*}= \frac{\sum_{i\in [n]} (\underline{c}_i+d_i) y^*_i}{\sum_{i\in [n]} \min\{\underline{c}_i+d_i, \underline{c}_i+\Gamma\frac{d_i}{D}\}y^*_i}.$$
	Let $I_1=\{i\in I(\pmb{y}^*): \underline{c}_i+d_i\leq \underline{c}_i+\Gamma\frac{d_i}{D}\}$ and $I_2=I(\pmb{y}^*)\setminus I_1$. We get
	$$
	\rho^*\leq \frac{\sum_{i\in I_1\cup I_2} (\underline{c}_i+ d_i) }{\sum_{i\in I_1}(\underline{c}_i+d_i) +\sum_{i\in I_2} (\underline{c}_i+ \Gamma\frac{d_i}{D})}\leq \frac{\sum_{i\in I_1\cup I_2} (\underline{c}_i+ d_i) }{\sum_{i\in I_1}(\underline{c}_i+d_i) +\sum_{i\in I_2} (\underline{c}_i+ \beta d_i)}.$$
	Because $\beta\in (0,1]$,
	$$\rho^*\leq  \frac{\sum_{i\in I_1\cup I_2} (\underline{c}_i+ d_i)}{\beta(\sum_{i\in I_1}(\underline{c}_i+d_i) +\sum_{i\in I_2} (\underline{c}_i+d_i))}=\frac{1}{\beta}.$$
\end{proof}
Lemma~\ref{lemappr2} shows that $\rho^*$ is not large if the budget $\Gamma$ is not significantly smaller than $D$, which denotes the maximum amount of the uncertainty which can be allocated to the second stage item costs.

\begin{lem}
\label{lemappr3}
	Assume that the uncertainty set is $\mathcal{U}_0$. Let $q=|\{i\in [n]: d_i>0\}|$.
	Then, the inequality $\rho^*\leq q+1$ holds. Furthermore, if $\mathcal{P}$ is an equal cardinality problem and $d_i\geq \Gamma/n$ for each $i\in [n]$, then $\rho^*\leq \frac{n}{m}+1$.
\end{lem}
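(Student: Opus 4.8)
I would prove the two inequalities separately and begin with the second, ``$\rho^*\le\tfrac nm+1$'', part, since it is short. The plan is to feed one carefully chosen scenario into the definition $\rho^*=\min_{\pmb c\in\mathcal U_0}\rho(\pmb c)$ with $\rho(\pmb c)=UB/\textsc{Rec}(\pmb c)$. Because $d_i\ge\Gamma/n$ for every $i$, the scenario $\pmb c^+=\underline{\pmb c}+\tfrac\Gamma n\pmb 1$ lies in $\mathcal U_0$ (indeed $\pmb 0\le\tfrac\Gamma n\pmb 1\le\pmb d$ and $\|\tfrac\Gamma n\pmb 1\|_1=\Gamma$), and under the equal cardinality assumption every $(\pmb x,\pmb y)\in\mathcal Z$ has $\sum_i y_i=m$, so $\tfrac\Gamma n\pmb 1\cdot\pmb y=\tfrac{m\Gamma}n$ is constant over $\mathcal Z$ and hence $\textsc{Rec}(\pmb c^+)=\textsc{Rec}(\underline{\pmb c})+\tfrac{m\Gamma}n$. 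Combining this with $UB=\min\{\textsc{Rec}(\underline{\pmb c})+\Gamma,\textsc{Rec}(\underline{\pmb c}+\pmb d)\}\le\textsc{Rec}(\underline{\pmb c})+\Gamma$ (Section~\ref{secub}) and writing $R=\textsc{Rec}(\underline{\pmb c})\ge 0$, the inequality $m\le n$ gives $R+\tfrac{m\Gamma}n\ge\tfrac mn(R+\Gamma)$, so
$$\rho^*\le\rho(\pmb c^+)=\frac{UB}{\textsc{Rec}(\pmb c^+)}\le\frac{R+\Gamma}{R+\frac{m\Gamma}n}\le\frac nm\le\frac nm+1 .$$

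For the bound $\rho^*\le q+1$ I would use the adversarial lower bound $opt\ge\max_{\pmb c\in\mathcal U_0}\textsc{Rec}(\pmb c)$ of Section~\ref{advlb} together with inequality~(\ref{appr2}). \emph{Step one:} if $D:=\sum_i d_i\le\Gamma$, then $\underline{\pmb c}+\pmb d\in\mathcal U_0$, so $opt\ge\textsc{Rec}(\underline{\pmb c}+\pmb d)\ge UB$ and $\rho^*\le 1$; hence assume $D>\Gamma$. \emph{Step two:} let $J=\{i:d_i>0\}$, $|J|=q$, let $(\overline{\pmb x},\overline{\pmb y})$ be optimal for $\textsc{Rec}(\underline{\pmb c}+\pmb d)$ and put $S=I(\overline{\pmb y})\cap J$, so $|S|\le q$ and $UB\le\textsc{Rec}(\underline{\pmb c}+\pmb d)=\pmb C\overline{\pmb x}+\underline{\pmb c}\,\overline{\pmb y}+\sum_{i\in S}d_i$. \emph{Step three:} consider the at most $q+1$ scenarios $\underline{\pmb c}$ and $\pmb c^{(j)}=\underline{\pmb c}+\min\{d_j,\Gamma\}\pmb e_j\in\mathcal U_0$ for $j\in S$, so that $opt\ge\max\{\textsc{Rec}(\underline{\pmb c}),\max_{j\in S}\textsc{Rec}(\pmb c^{(j)})\}$, and argue by a dichotomy: either some $\textsc{Rec}(\underline{\pmb c})$-optimal pair uses no coordinate of $J$, whence $\textsc{Rec}(\underline{\pmb c}+\pmb d)=\textsc{Rec}(\underline{\pmb c})$ and $\rho^*\le 1$; or no such pair exists, and then the concentrated scenarios $\pmb c^{(j)}$ (respectively the average of the values $\textsc{Rec}(\pmb c^{(j)})$ against $\textsc{Rec}(\underline{\pmb c})$, using $|S|\le q$) must reach at least $\tfrac1{q+1}\bigl(\pmb C\overline{\pmb x}+\underline{\pmb c}\,\overline{\pmb y}+\sum_{i\in S}d_i\bigr)\ge\tfrac1{q+1}UB$, which with~(\ref{appr2}) yields $\rho^*\le q+1$. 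The ``$d_i\ge\Gamma/n$'' refinement is then obtained by specialization: under equal cardinality $\sum_{i\in S}d_i$ is a sum of at most $m$ of the $d_i$, and replacing the $\pmb c^{(j)}$-averaging by the $\pmb c^+$-argument from the first part gives $\rho^*\le\tfrac nm+1$.

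The step I expect to be the main obstacle is the second branch of the dichotomy: quantifying how much an optimal recoverable solution can ``dodge'' a concentrated deviation $\min\{d_j,\Gamma\}\pmb e_j$ while its value under $\underline{\pmb c}+\pmb d$ stays large, and keeping track of how the budget $\Gamma$ splits among the at most $q$ coordinates of $S$. An alternative way to organize this estimate is through the min--max identity $\max_{\pmb c\in\mathcal U_0}\textsc{Rec}(\pmb c)=\min_{(\pmb x,\pmb y)\in\mathrm{conv}\,\mathcal Z}\bigl(\pmb C\pmb x+\max_{\pmb c\in\mathcal U_0}\pmb c\pmb y\bigr)$ (Sion's minimax theorem, the inner function being convex in $\pmb y$): one then only needs $\max_{\pmb c\in\mathcal U_0}\pmb c\pmb y^\sharp-\underline{\pmb c}\pmb y^\sharp\ge\tfrac1{q+1}\min\{\Gamma,\pmb d\pmb y^\sharp\}$ at a minimizer $\pmb y^\sharp$, and it is precisely here that the bound $q$ on the number of movable coordinates enters. (For $q=1$ this already gives the sharper $\rho^*\le 1$.)
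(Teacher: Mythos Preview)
Your argument for the bound $\rho^*\le \tfrac{n}{m}+1$ is correct and is essentially the paper's argument: pick the single ``spread'' scenario $\pmb c^+=\underline{\pmb c}+\tfrac{\Gamma}{n}\pmb 1\in\mathcal U_0$ and compare $UB$ to $\textsc{Rec}(\pmb c^+)$. You even squeeze out the slightly sharper $\rho^*\le \tfrac{n}{m}$ by not discarding the nonnegative term $\textsc{Rec}(\underline{\pmb c})$ in the denominator, whereas the paper keeps only the increment $\tfrac{m\Gamma}{n}$ and lands at $1+\tfrac{n}{m}$.

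For the bound $\rho^*\le q+1$ there is a genuine gap, and it is exactly the one you flag. Your Step~3 asserts that the maximum of $\textsc{Rec}$ over $\underline{\pmb c}$ and the $q$ concentrated scenarios $\pmb c^{(j)}=\underline{\pmb c}+\min\{d_j,\Gamma\}\pmb e_j$ is at least $\tfrac{1}{q+1}UB$, but you do not prove it, and the ``dodging'' you worry about is real: for each $j$ the recoverable optimum can simply avoid coordinate~$j$, so there is no averaging identity linking $\textsc{Rec}(\pmb c^{(j)})$ to $\pmb C\overline{\pmb x}+\underline{\pmb c}\,\overline{\pmb y}+\sum_{i\in S}d_i$. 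The Sion route you sketch also leaves the key inequality $\max_{\pmb\delta}\pmb\delta\pmb y^\sharp\ge\tfrac{1}{q+1}\min\{\Gamma,\pmb d\pmb y^\sharp\}$ unproved.

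The fix is much simpler and is precisely your own idea from the second part, applied once more. Instead of $q$ concentrated scenarios, use a \emph{single} uniformly spread scenario
\[
\pmb c'\ \text{with}\ c'_i=\underline c_i+\min\{d_i,\Gamma/q\},
\]
which lies in $\mathcal U_0$ since $\sum_i\min\{d_i,\Gamma/q\}\le q\cdot\Gamma/q=\Gamma$. Now feed $\pmb c_0=\pmb c'$ into inequality~(\ref{appr2}). Let $(\pmb x^*,\pmb y^*)$ be optimal for $\textsc{Rec}(\pmb c')$. If every $i\in I(\pmb y^*)$ has $d_i\le\Gamma/q$, then $\pmb c'\pmb y^*=(\underline{\pmb c}+\pmb d)\pmb y^*$ and the second term of~(\ref{appr2}) gives $\rho^*\le 1$. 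Otherwise some $i\in I(\pmb y^*)$ has $c'_i\ge\Gamma/q$, hence $\pmb c'\pmb y^*\ge\Gamma/q$, and the first term of~(\ref{appr2}) gives
\[
\rho^*\ \le\ 1+\frac{\Gamma}{\pmb C\pmb x^*+\pmb c'\pmb y^*}\ \le\ 1+\frac{\Gamma}{\Gamma/q}\ =\ 1+q.
\]
This is the paper's proof; the point you missed is that the uniform-spread trick you used for $\tfrac{n}{m}+1$ works verbatim for $q+1$ once you cap each $\delta_i$ at $\min\{d_i,\Gamma/q\}$.
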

\begin{proof}
	Choose scenario $\pmb{c}'\in \mathcal{U}_0$ such that $c_i'=\min\{\underline{c}_i+d_i, \underline{c}_i+\frac{\Gamma}{q}\}$ for $i\in [n]$. 
	Indeed, $\pmb{c}'\in \mathcal{U}_0$, because $\sum_{i\in [n]} \delta'_i=\sum_{\{i\in [n]:d_i>0\}} \delta_i'\leq q\cdot \frac{\Gamma}{q}=\Gamma$.
	Let $(\pmb{x}^*, \pmb{y}^*)\in \mathcal{Z}$ be an optimal solution to $\textsc{Rec}(\pmb{c}')$.
	Using the first term in the minimum in~(\ref{appr2}) we get
	$$\rho^*\leq 1+\frac{\Gamma}{\pmb{C}\pmb{x}^*+\pmb{c}'\pmb{y}^*}\leq 1+\frac{\Gamma}{\pmb{c}'\pmb{y}^*}=1+\frac{\Gamma}{\sum_{i\in [n]} \min\{\underline{c}_i+d_i,\underline{c}_i+\frac{\Gamma}{q}\}y^*_i}.$$
	Let $I_1=\{i\in I(\pmb{y}^*): \underline{c}_i+d_i\leq \underline{c}_i+\frac{\Gamma}{q}\}$ and $I_2=I(\pmb{y}^*)\setminus I_1$. We get
	$$\rho^*\leq 1+\frac{\Gamma}{\sum_{i\in I_1} (\underline{c}_i+d_i) + \sum_{i\in I_2} (\underline{c}_i+\frac{\Gamma}{q})}.$$
	If $I_2=\emptyset$, then we obtain $\rho^*=1$ by using the second term in the minimum in~(\ref{appr2}). So $|I_2|\geq 1$ and we can estimate
	$$\rho^*\leq 1+\frac{\Gamma}{\frac{\Gamma}{q}}=1+q.$$
	If $d_i\geq \frac{\Gamma}{n}$ for each $i\in [n]$ and $\mathcal{P}$ is an equal cardinality problem, then  $q=n$, $I_1=\emptyset$, $|I_2|=m$ and
	$$\rho^*\leq 1+\frac{\Gamma}{\sum_{i\in I_2} \frac{\Gamma}{n}}=1+\frac{n}{m}.$$
\end{proof}
We can now apply Lemma~\ref{lemappr3} to several special cases of problem $\mathcal{P}$ under $\mathcal{U}_0$, with $d_i\geq \frac{\Gamma}{n}$ for each $i\in [n]$. If $\mathcal{P}$ is the selection problem discussed in~\cite{CGKZ18}, then $\rho^*\leq 1+\frac{n}{p}$. If $\mathcal{P}$ is the minimum spanning tree problem in a sparse graph, in which $|E|\leq \theta |V|$ for some constant $\theta\geq 1$, then $\rho^*\leq 1 +\frac{\theta |V|}{|V|-1}\sim 1+\theta$ for large graphs. If $\mathcal{P}$ is the minimum assignment problem, then $\rho^*\leq 1+\sqrt{n}$.

\section{Experiments}
\label{secexp}
In this section we will show the results of some experiments. 
We will test the lower bounds and approximate solutions using two problems, namely the assignment and the knapsack ones. The assignment problem is polynomially solvable and has the equal cardinality property. Hence we can apply all the lower bounds, proposed in Section~\ref{seclb}, to this problem. On the other hand, the knapsack problem is NP-hard and does not posses the equal cardinality property. In consequence, only the adversarial and cardinality selection constraint  lower bounds will be used for this problem. We will use scenario set $\mathcal{U}_0$, i.e. the continuous budgeted uncertainty.
The experiments were executed on
a \texttt{2 GHz} computer equipped with  80 \texttt{Intel(R) Xeon(R) CPU E7-4850} processors. 
We used \texttt{IBM ILOG CPLEX 12.8.0.0} optimizer~\cite{CPLEX} to solve the MIP formulations.

\subsection{The minimum assignment}

In this section we will show  the results of experiments when $\mathcal{P}$ is the following  assignment problem:
$$
	\begin{array}{llll}
		\min & \sum_{i\in [m]}\sum_{j\in [m]} C_{ij} x_{ij} \\
			& \sum_{i\in [m]} x_{ij}=1 & j\in [m] \\
			& \sum_{j\in [m]} x_{ij}=1 & i\in [m]\\
			& x_{ij}\in \{0,1\} & i,j\in [m]
	\end{array}
$$
The experiment was performed for $m\in \{10, 25,100\}$, so the number of variables $n\in\{100, 625,10~000\}$. The parameters were generated in the following way:
\begin{enumerate}
	\item The first stage costs $C_{ij}$, nominal second stage costs $\underline{c}_{ij}$ are random integers uniformly distributed in $[1,20]$.
	\item The maximal deviations $d_{ij}$ are random integers uniformly distributed  in $[0,100]$.
	\item The budget $\Gamma=0.1\sum_{i,j\in [n]} d_{ij}$, hence it is equal to 10\% of the total uncertainty of the second stage cots. 
		\item $\alpha\in \{0.1, 0.2, \dots,0.9\}$.
	\item The accuracy $\epsilon$ in Algorithm~\ref{alg1} and Algorithm~\ref{alg2} was set to 0.01 and both algorithms were terminated if the running time exceeds 600 seconds.
	 The accuracy of computing the Lagrangian lower bound by a version of golden search method was set to 0.1. The maximal time of solving the problem~(\ref{lagfin1}) was set to 600 seconds. After this time the computations of the bound were terminated.
\end{enumerate}

For each parameters setting, we have generated~10 random instances. 
In the first experiment we have computed, for every instance, the ratio 
\begin{equation}
\label{rhoh}
\rho(\pmb{c}_0)=\frac{\min\{\textsc{Rec}(\underline{\pmb{c}})+\Gamma, \textsc{Rec}(\underline{\pmb{c}}+\pmb{d})\}}{\textsc{Rec}(\pmb{c}_0)},
\end{equation}
 where $\pmb{c}_0$ is the heuristic scenario proposed in Section~\ref{advlb}. Computing this ratio requires solving three recoverable problems. Recall that the better of the solutions $\underline{\pmb{x}}$ or $\overline{\pmb{x}}$ has an approximation ratio at most $\rho(\pmb{c}_0)$.
 
 \begin{figure}[ht]
	\centering
	\includegraphics[height=5cm]{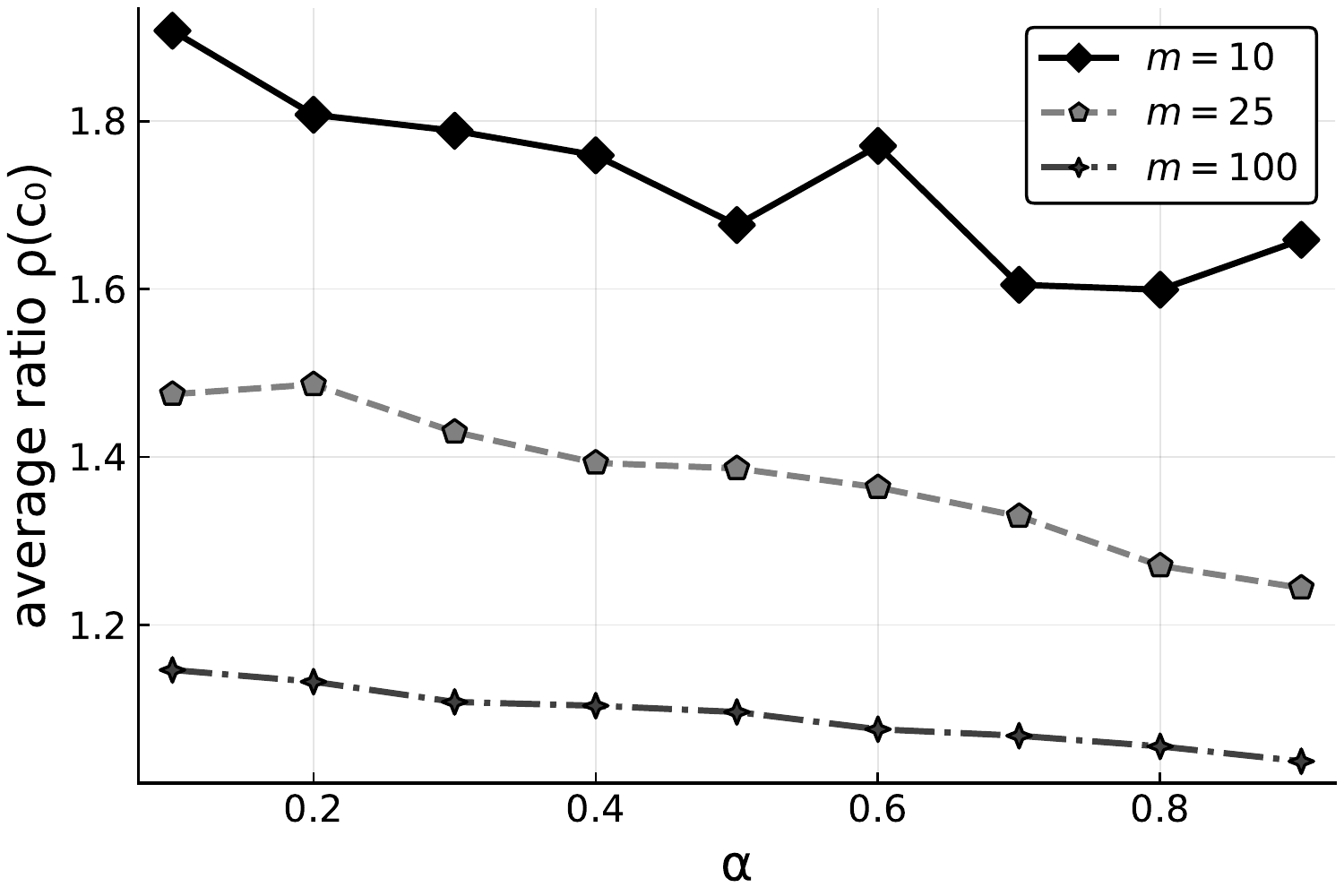}
	\includegraphics[height=5cm]{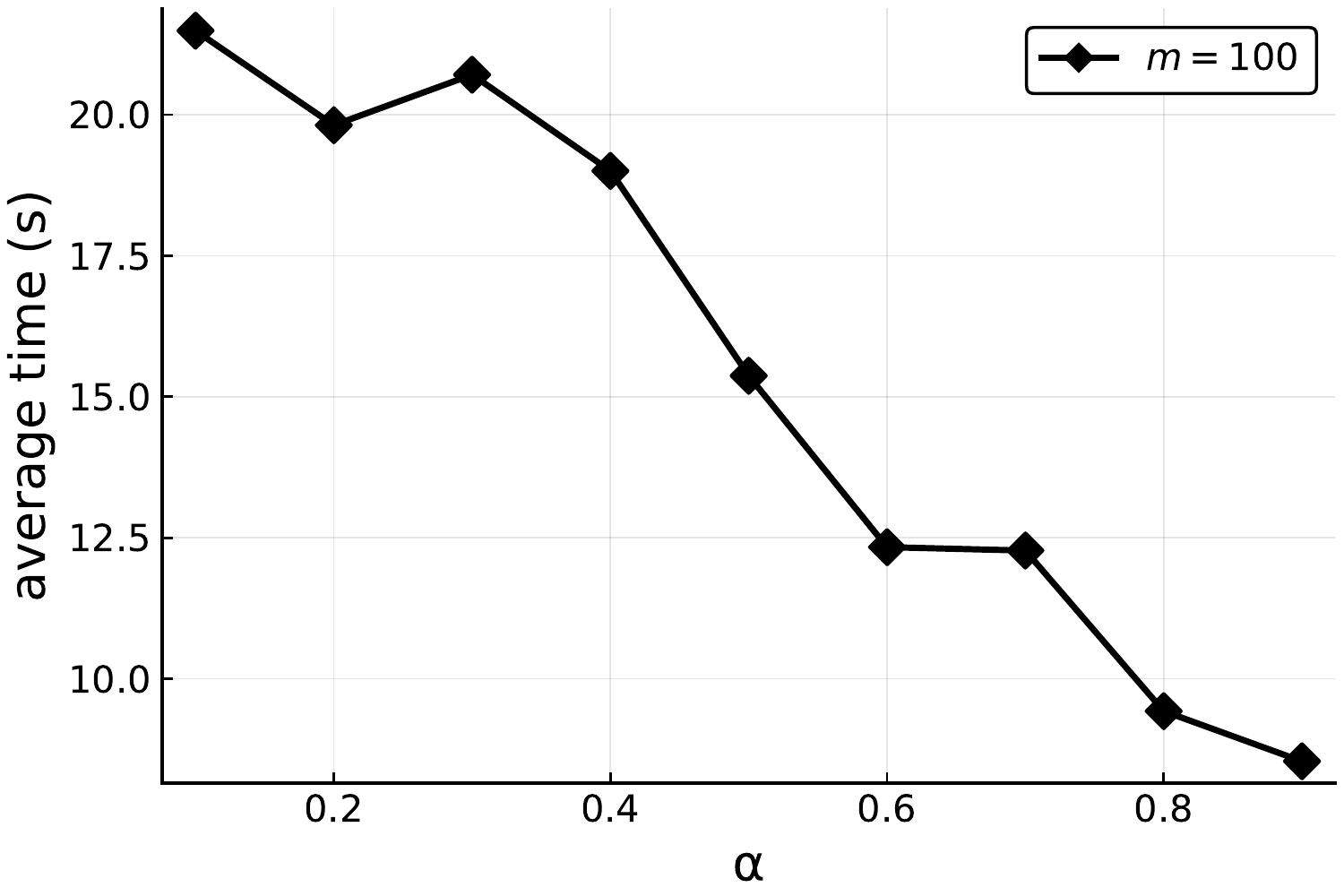}
	\caption{The average ratios $\rho(\pmb{c}_0)$ for the assignment problem with $m\in \{10,25,100\}$ and the average running times of computing $\rho(\pmb{c}_0)$ for $m=100$.} \label{fig4}
\end{figure}
 
 The average ratios $\rho(\pmb{c}_0)$ for various $m$ and the average times required to  compute it for $m=100$ are shown in Figure~\ref{fig4}. Observe first, that $\rho(\pmb{c}_0)$ can be computed efficiently. The average time required to compute $\rho(\pmb{c}_0)$, for $m=100$, is less than 25 seconds. It can be observed that the time is significantly smaller for larger~$\alpha$. The average ratios $\rho(\pmb{c}_0)$ are less than~2.  Interestingly, the ratio $\rho(\pmb{c}_0)$ is smaller for larger $m$ (for $m=100$ the average ratios $\rho(\pmb{c}_0)$ are less than~1.2). This fact is true for the particular method of data generation and may be different for other settings (verifying this requires more tests).

We now investigate the cases $m=10$ and $m=25$ in more detail. For each instance we computed: the adversarial lower bound $LB_{Adv}$ by executing Algorithm~\ref{alg2}, the lower bound $LB_{h}=\textsc{Rec}(\pmb{c}_0)$  for scenario $\pmb{c}_0\in \mathcal{U}$, proposed in Section~\ref{advlb}, the cardinality selection constraint  lower bound $LB_{Sel}$ by solving the MIP formulation~(\ref{selfin1}) constructed in Section~\ref{sellb} and the Lagrangian lower bound $LB_{Lag}$ constructed in Section~\ref{laglb}. Notice that $LB_{h}\geq LB_{Adv}$ for every instance, as we start Algorithm~\ref{alg2} from the initial scenario $\pmb{c}_0$.
We also computed the first stage solutions $\underline{\pmb{x}}$ and $\overline{\pmb{x}}$ (see Section~\ref{secub}), and the quantities $\textsc{Eval}(\underline{\pmb{x}})$ and $\textsc{Eval}(\overline{\pmb{x}})$ by using Algorithm~\ref{alg1}. We have computed the average ratios 
\begin{equation}
\label{rat0}
\rho_k=\frac{\min\{\textsc{Eval}(\underline{\pmb{x}}), \textsc{Eval}(\overline{\pmb{x}})\}}{LB_k},
\end{equation}
for each lower bound $LB_k$. We also measured the average running times of computing the ratios.

\begin{figure}[ht]
	\centering
	\includegraphics[height=5cm]{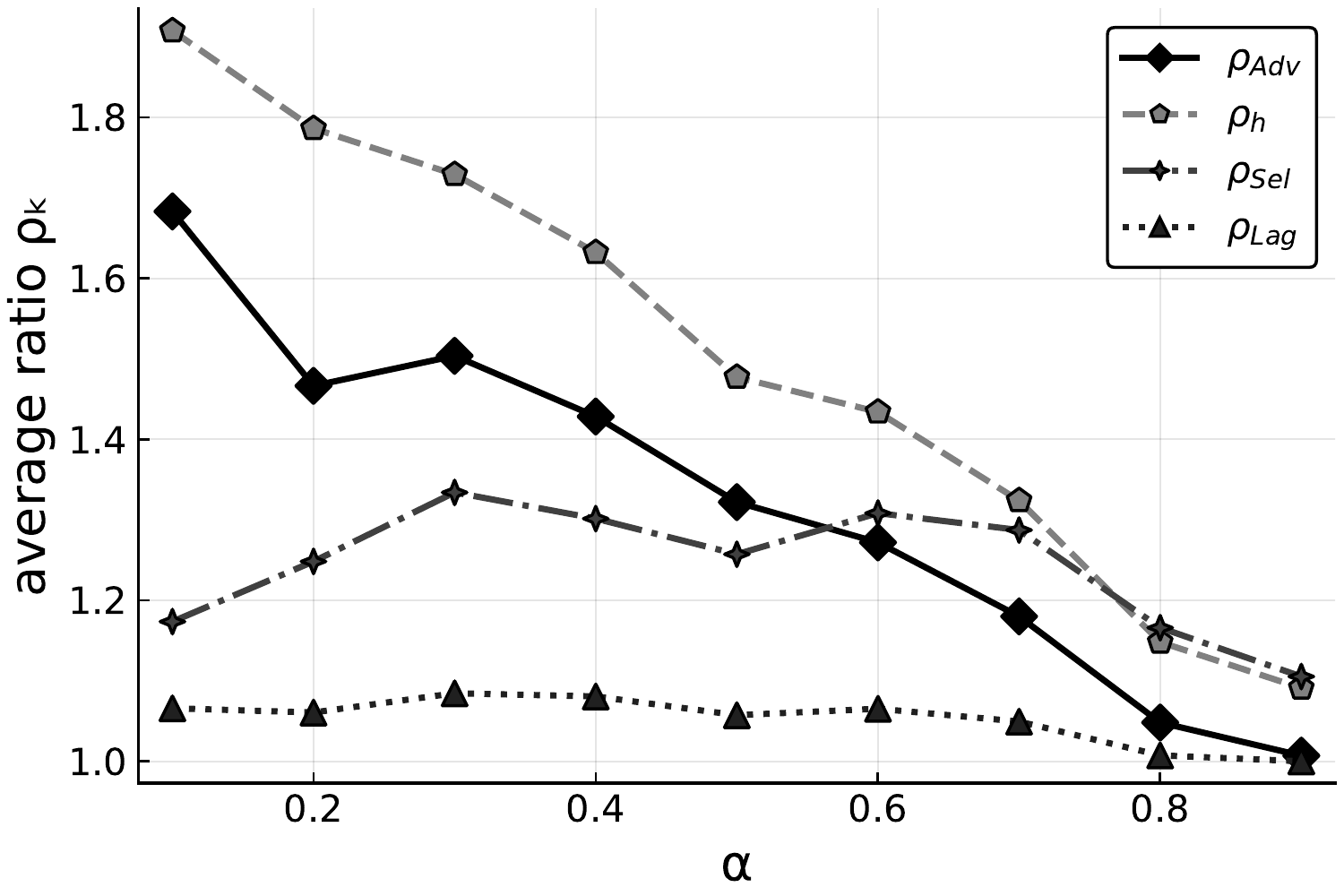}
	\includegraphics[height=5cm]{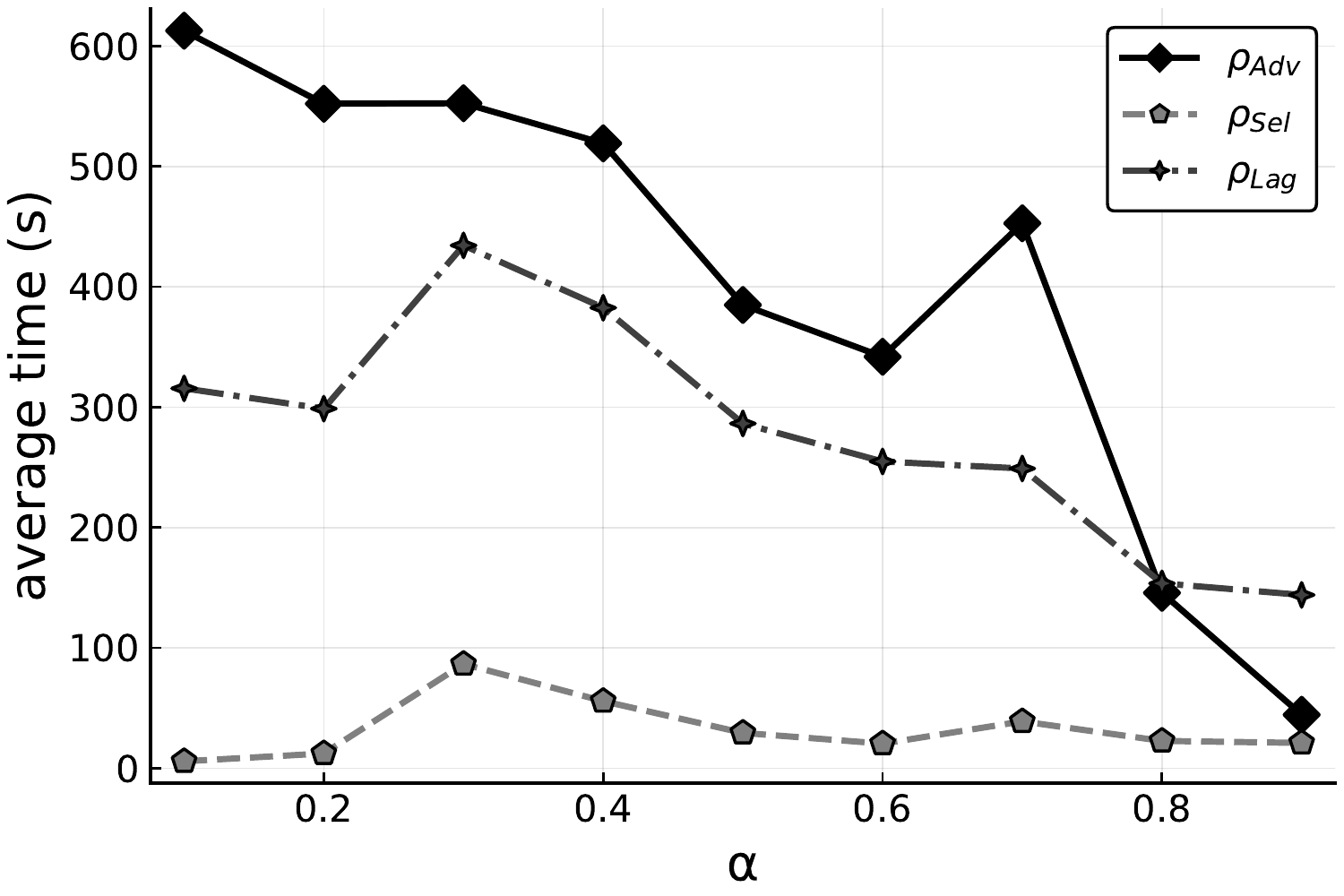}
	\caption{The average ratios $\rho_k$ and the average times for computing them for the assignment problem with $m=10$.} \label{fig2}
\end{figure}

The results for $m=10$ are shown in Figure~\ref{fig2}. For $m=10$, all the quantities were solved to optimality, or with the assumed accuracy $\epsilon$, when Algorithms~\ref{alg1} and~\ref{alg2} were used. One can observe in Figure~\ref{fig2} that  one of $\underline{\pmb{x}}$ or $\overline{\pmb{x}}$ is  always a good approximate solution. The best lower bound can be computed by using the Lagrangian relaxation technique (the bound $LB_{Lag}$). On can also see in Figure~\ref{fig2} that the evaluation and all lower bounds can be computed in reasonable time. The best lower bound is $LB_{Lag}$. As one can expect, the lower bound $LB_{Sel}$ is better than $LB_{Adv}$ for smaller $\alpha$ and worse for larger. Notice, however, that $LB_{Sel}$ can be computed very efficiently.

Figure~\ref{fig3} shows the results for $m=25$. 
We can still observe an improvement of $LB_{Adv}$ over $LB_{h}$. For this case, not all solutions $\overline{\pmb{x}}$ and $\underline{\pmb{x}}$ were evaluated exactly. For some instances Algorithm~\ref{alg2} was terminated after the time of 600 seconds was exceeded. In this case we obtained upper bounds on $\textsc{Eval}(\underline{\pmb{x}})$ and $\textsc{Eval}(\overline{\pmb{x}})$. The Lagrangian lower bound $LB_{Lag}$ was harder to compute than for $m=10$.  In Figure~\ref{fig3}, in the brackets the number of instances, for which  $LB_{Lag}$ was computed successfully, is shown. But, as for $m=10$, it outperforms all the remaining lower bounds and suggests that the approximate solutions behave well. The cardinality selection constraint lower bound $LB_{Sel}$ outperforms  $LB_{Adv}$  for $\alpha\leq 0.5$. The time required to compute $LB_{Sel}$ is again small.
\begin{figure}[ht]
	\centering
	\includegraphics[height=5cm]{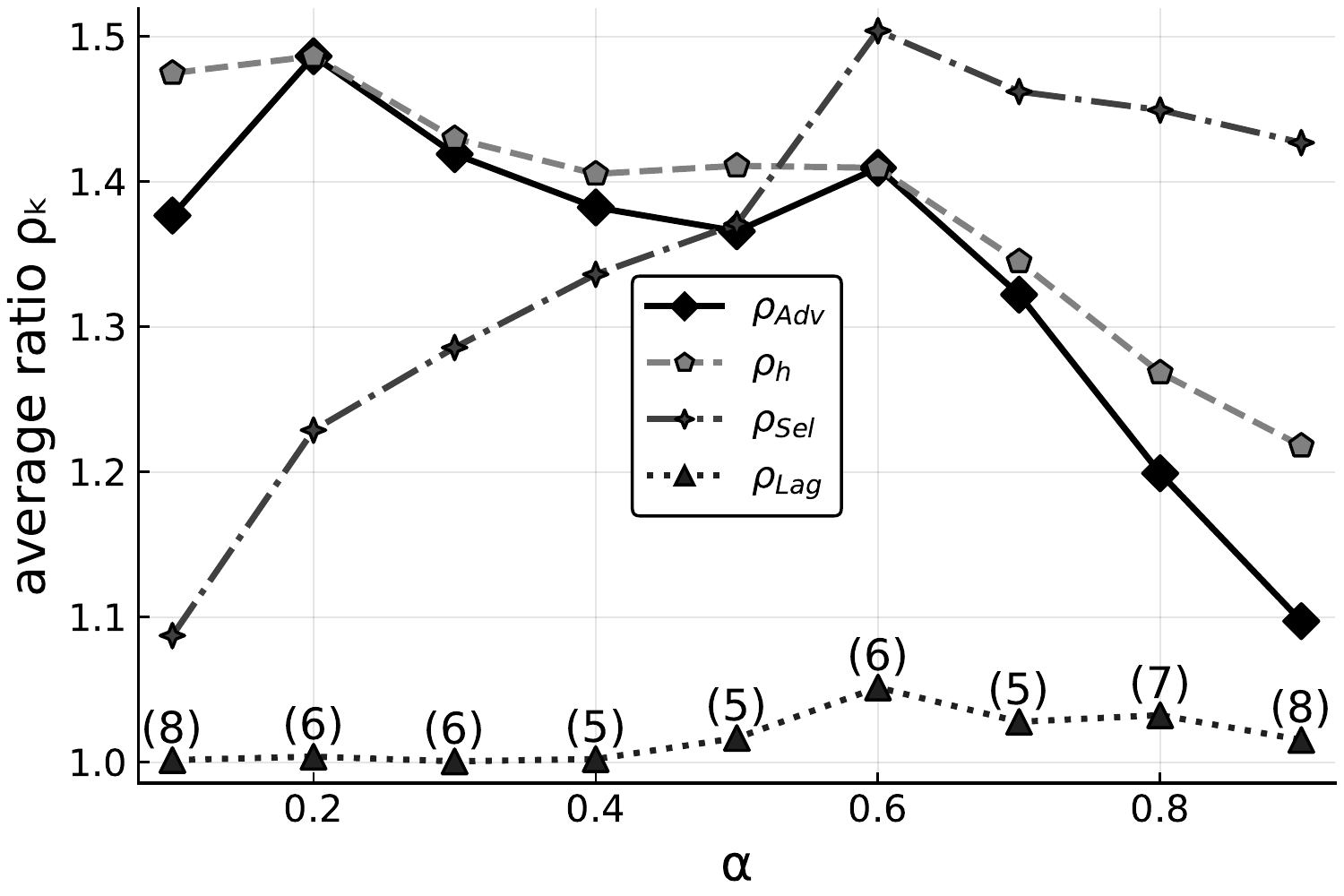}
	\includegraphics[height=5cm]{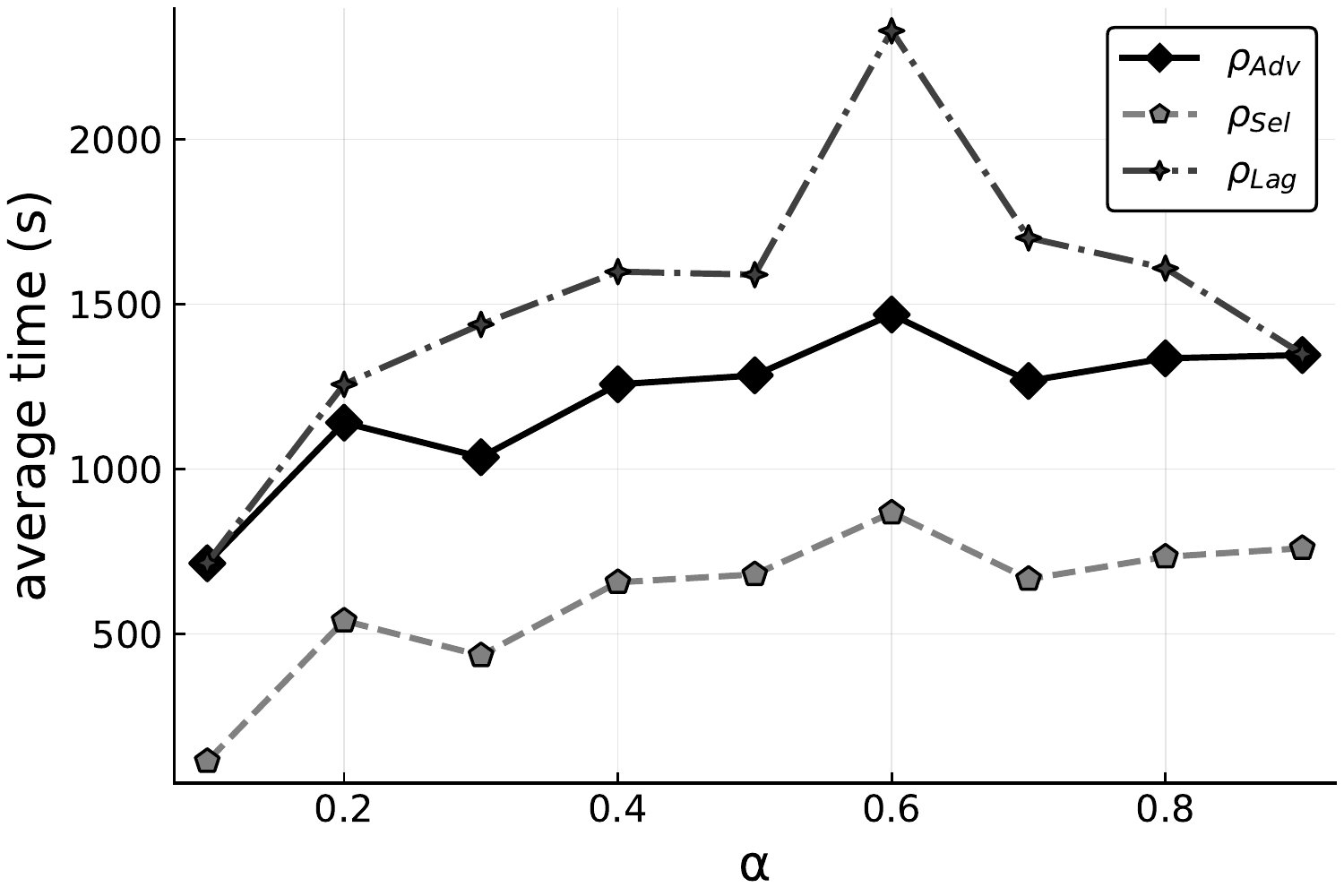}
	\caption{The average ratios $\rho_k$ and the average times for computing them for the assignment problem with $m=25$. The numbers in brackets denote the number of instances for which the value of $LB_{Lag}$ was computed successfully.} \label{fig3}
\end{figure}

\subsection{The minimum knapsack}

In this section we will show  the results of experiments when $\mathcal{P}$ is the following minimum knapsack problem:
$$
	\begin{array}{llll}
		\min & \sum_{i\in [n]} C_i x_i \\
			& \sum_{i\in [n]} w_i x_i\geq W \\
			& x_i\in \{0,1\} & i\in [n]
	\end{array}
$$
The test were performed for $n\in \{100, 400, 1000\}$, with the following parameter setting:
\begin{itemize}
	\item The first stage costs $C_i$, nominal second stage costs $\underline{c}_i$, and weights $w_i$ are random integers uniformly distributed in $[1,20]$. The knapsack capacity $W=0.3\sum_{i\in [n]} w_i$.
	\item The maximal deviations $d_i$ are random integers uniformly distributed in $[0,100]$.
	\item The budget $\Gamma=0.1\sum_{i\in [n]} d_i$.
	\item $\alpha\in \{0.1, 0.2, \dots,0.9\}$.
	\item The accuracy in Algorithm~\ref{alg1} and Algorithm~\ref{alg2} was set to 0.01. Algorithm~\ref{alg1} and Algorithm~\ref{alg2}  were terminated after the time limit of 600 seconds was exceeded. Also, the time limit on the MIP formulation~(\ref{selfin1}), for computing the cardinality selection constraint  lower bound, was set to 600 seconds. If this time was exceeded, then an estimation from below for this lower bound was returned.
\end{itemize}

For each parameters settings, we have generated~10 random instances. In the first experiment we computed the ratio $\rho(\pmb{c}_0)$ by using~(\ref{rhoh}). The average ratios and the average running time of computing them for $n=1000$ are shown in Figure~\ref{fig7}.

\begin{figure}[ht]
	\centering
	\includegraphics[height=5cm]{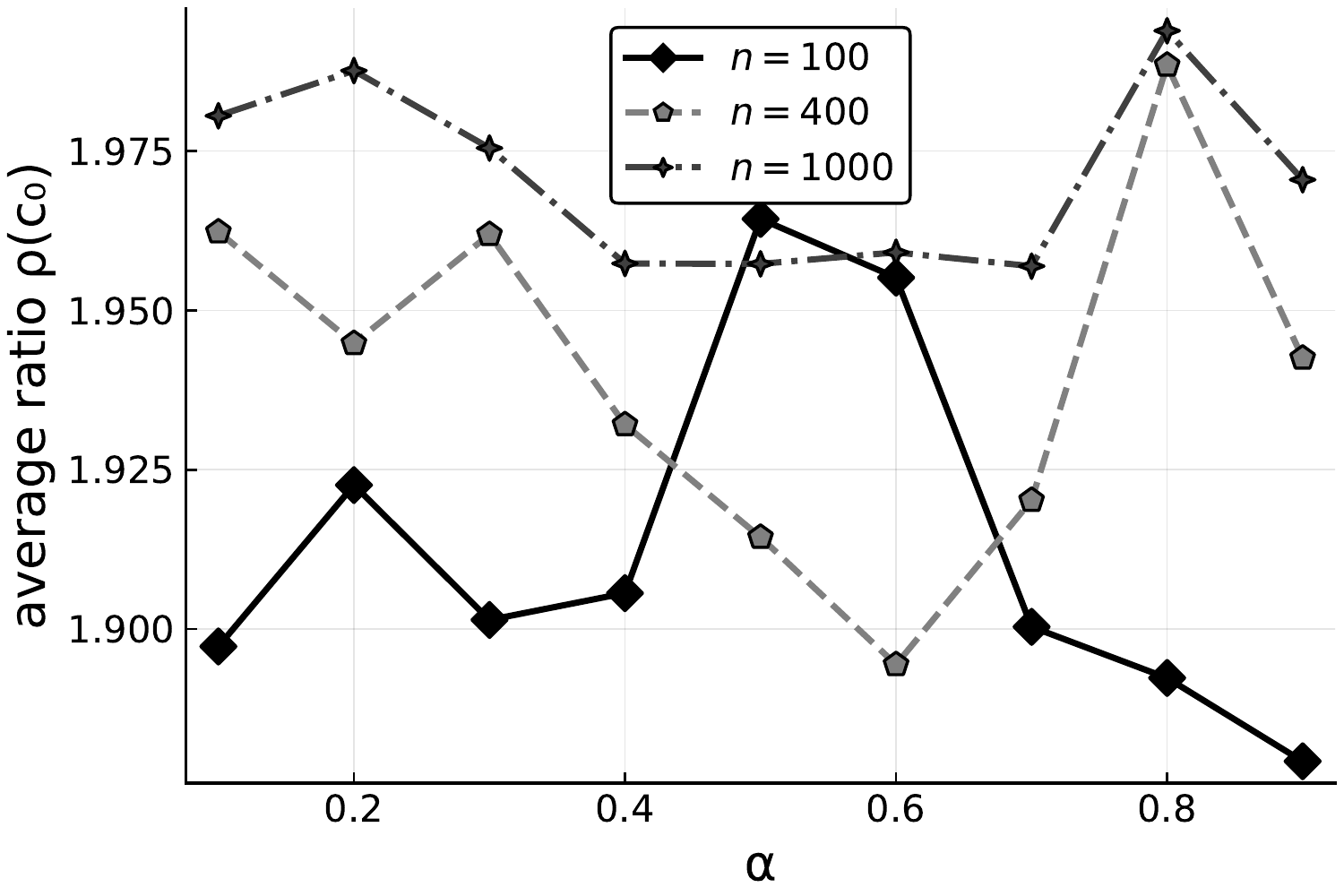}
	\includegraphics[height=5cm]{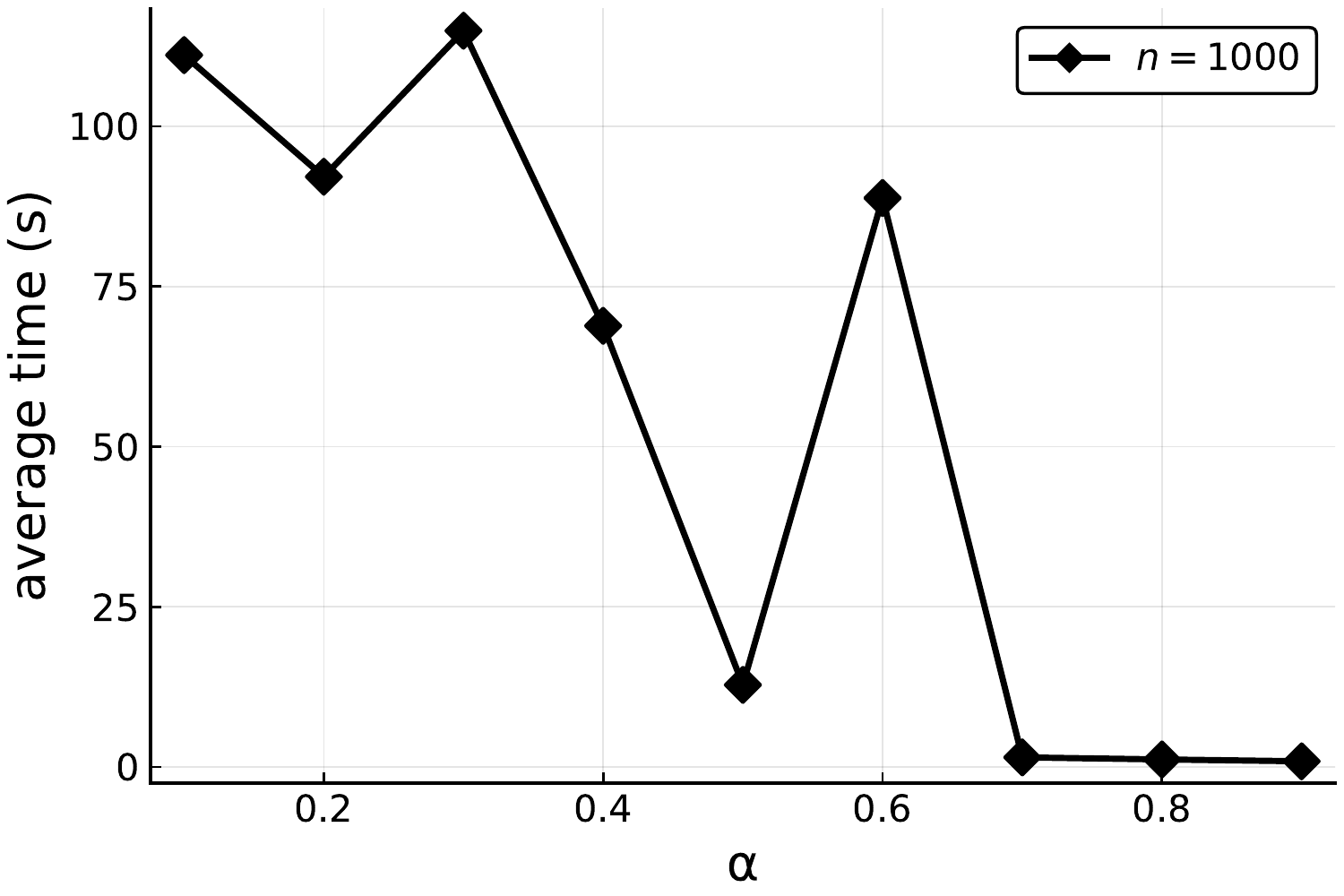}
	\caption{The average ratios $\rho(\pmb{c}_0)$ for the knapsack problem with $n\in\{100,400,1000\}$ and the average times of computing $\rho(\pmb{c}_0)$ for the knapsack problem with $n=1000$.} \label{fig7}
\end{figure}

Observe first that the ratio $\rho(\pmb{c}_0)$ can be computed efficiently. The largest running times were observed for $\alpha=0.3$. The average value of this ratio is less than 2.0 and for smaller $n$ the figure is more chaotic. For $n=1000$, the average value of $\rho(\pmb{c}_0)$ is close to 1.975 for all $\alpha$. This behavior is different than for the assignment problem (see Figure~\ref{fig4}), where the ratio is significantly smaller for larger instances and slightly decreases when $\alpha$ increases. 

\begin{figure}[ht]
	\centering
	\includegraphics[height=5cm]{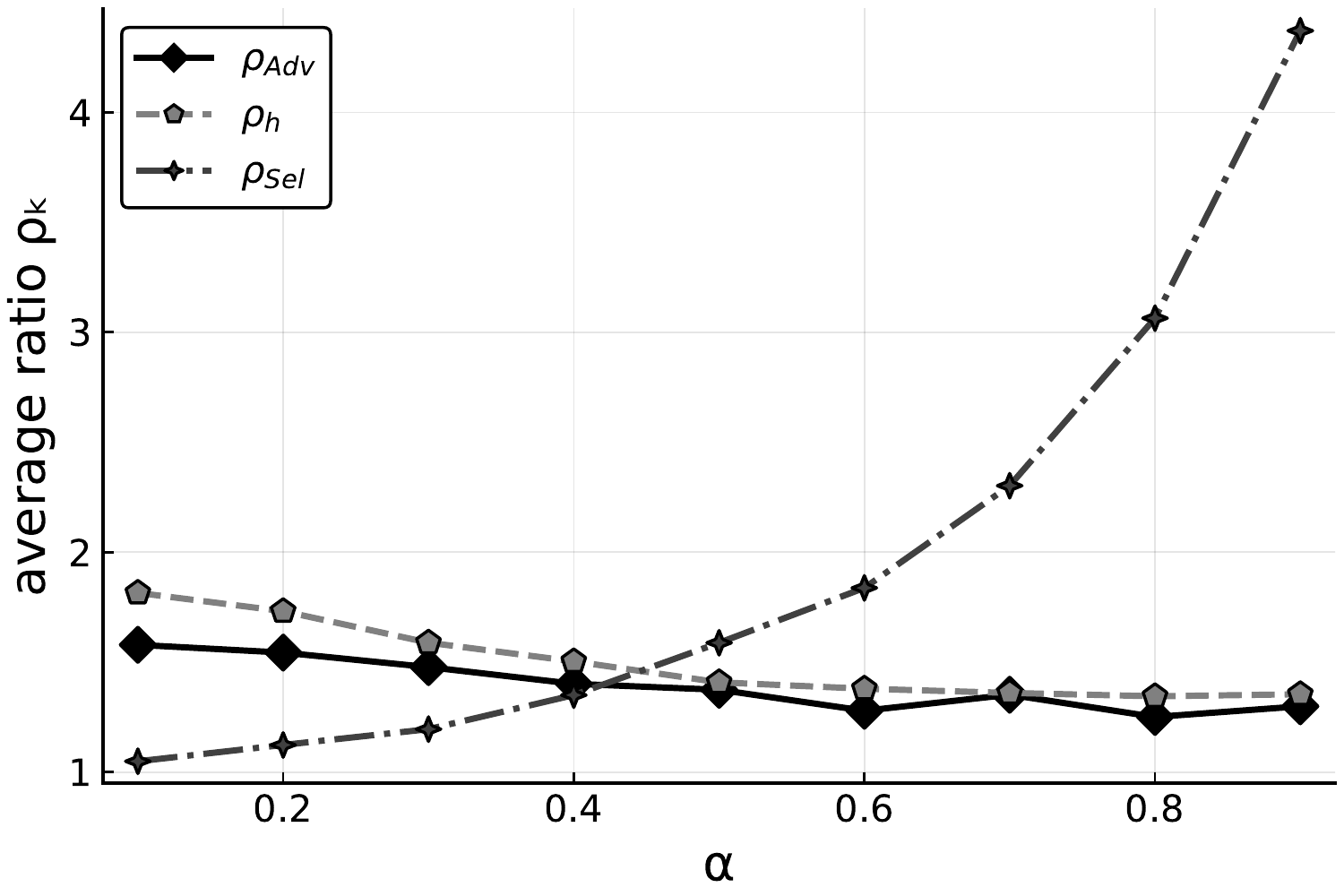}
	\includegraphics[height=5cm]{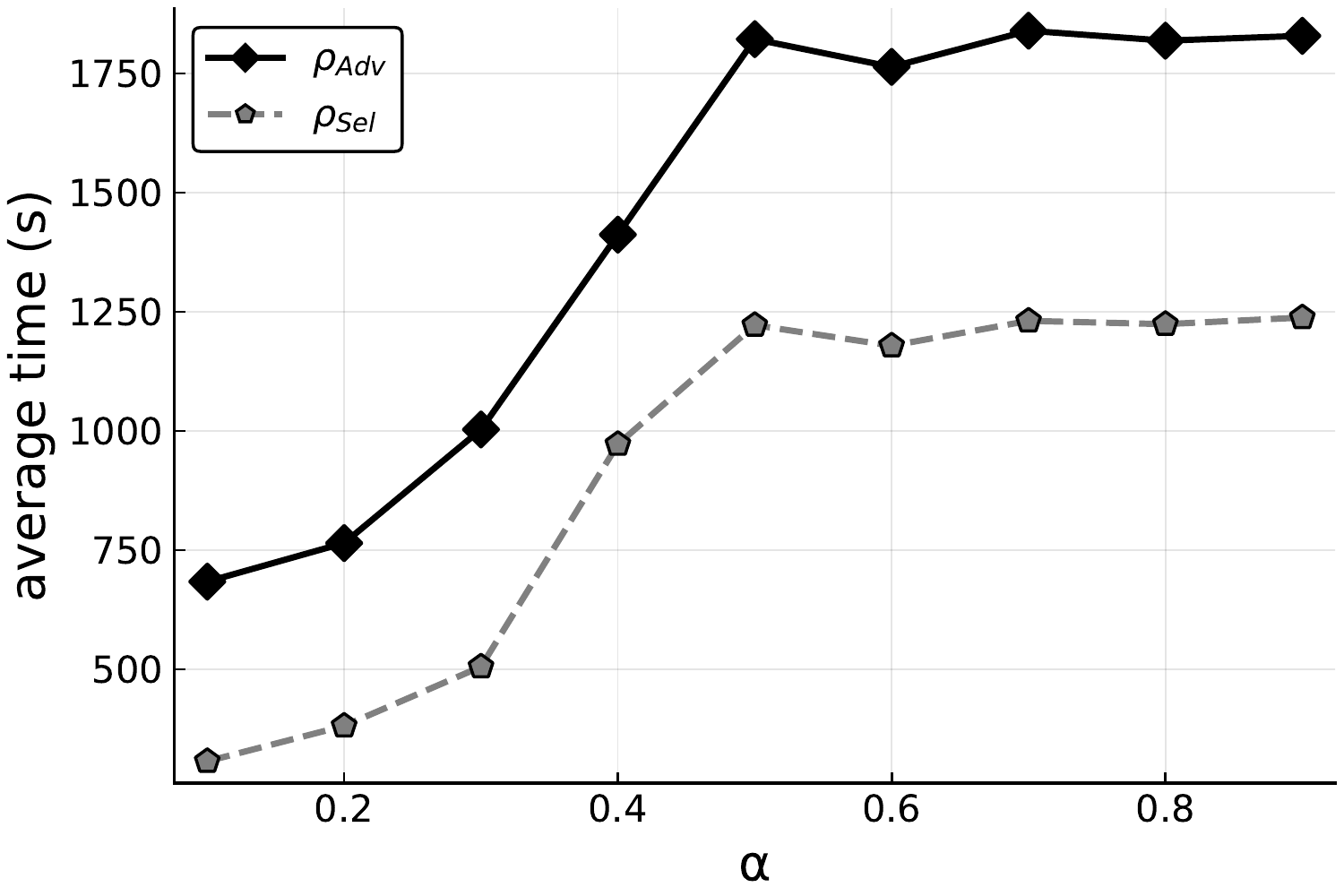}
	\caption{The average ratios $\rho_k$ and the average times for computing them for the knapsack problem with $n=100$.} \label{fig9}
\end{figure}

We next considered the case with $n=100$.
Figure~\ref{fig9} shows the average ratios $\rho_{Adv}$, $\rho_h$ and $\rho_{Sel}$ for $n=100$ (see~(\ref{rat0})) and the average times for computing these ratios for various $\alpha$. One can observe that the approximation algorithm proposed in Section~\ref{secub} performs well for the tested instances. By using better of $LB_{Adv}$ and $LB_{Sel}$, the average ratio for each $\alpha$ was not greater than 1.5. There is also an improvement of $\rho_{Adv}$ over $\rho_{h}$. The cardinality selection constraint  lower bound is better than the adversarial one for $\alpha<0.4$ and worse for $\alpha>0.4$. Observe that computing $\rho_{Sel}$ for the knapsack problem is more time consuming than for the assignment.

\begin{figure}[ht]
	\centering
	\includegraphics[height=5cm]{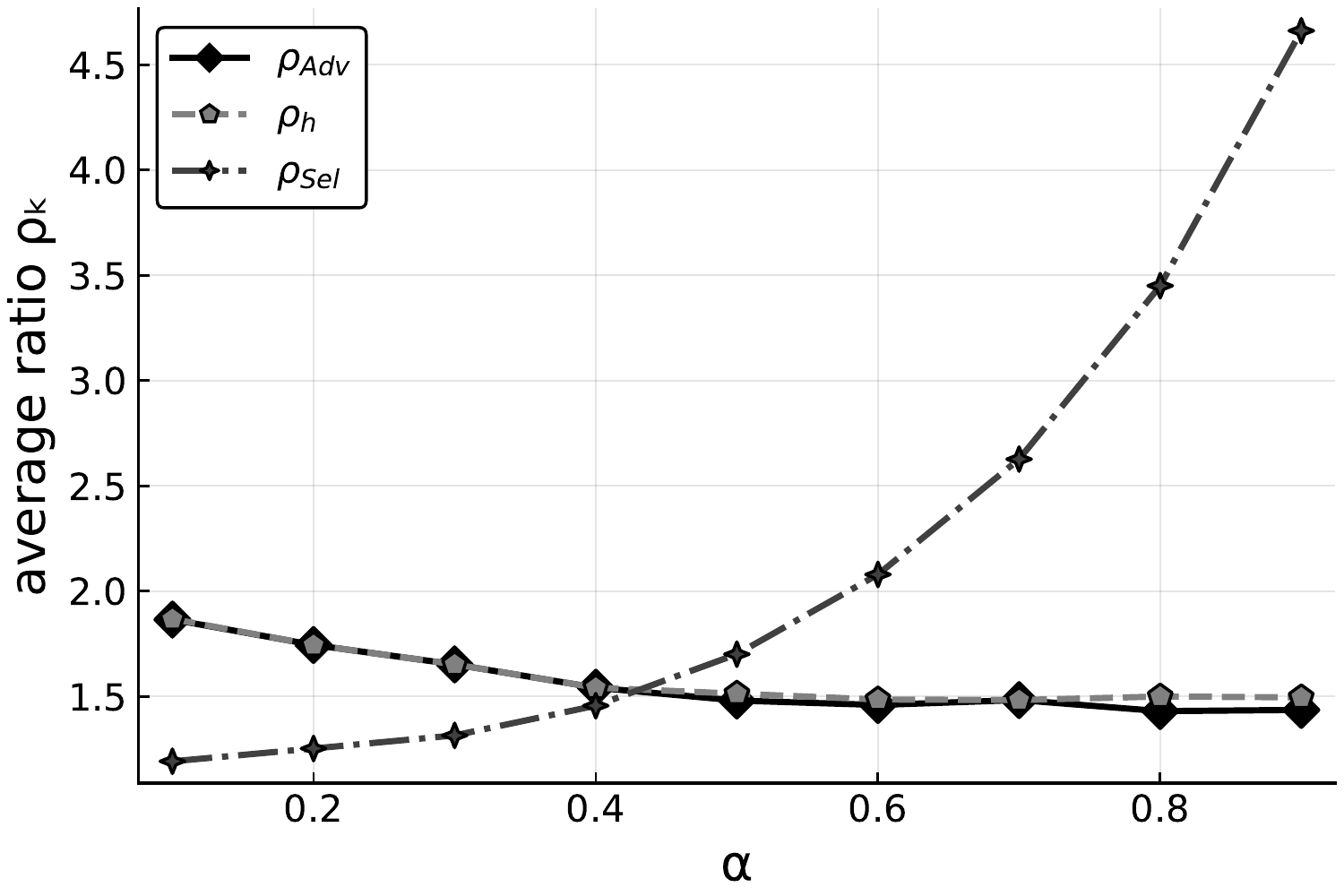}
	\includegraphics[height=5cm]{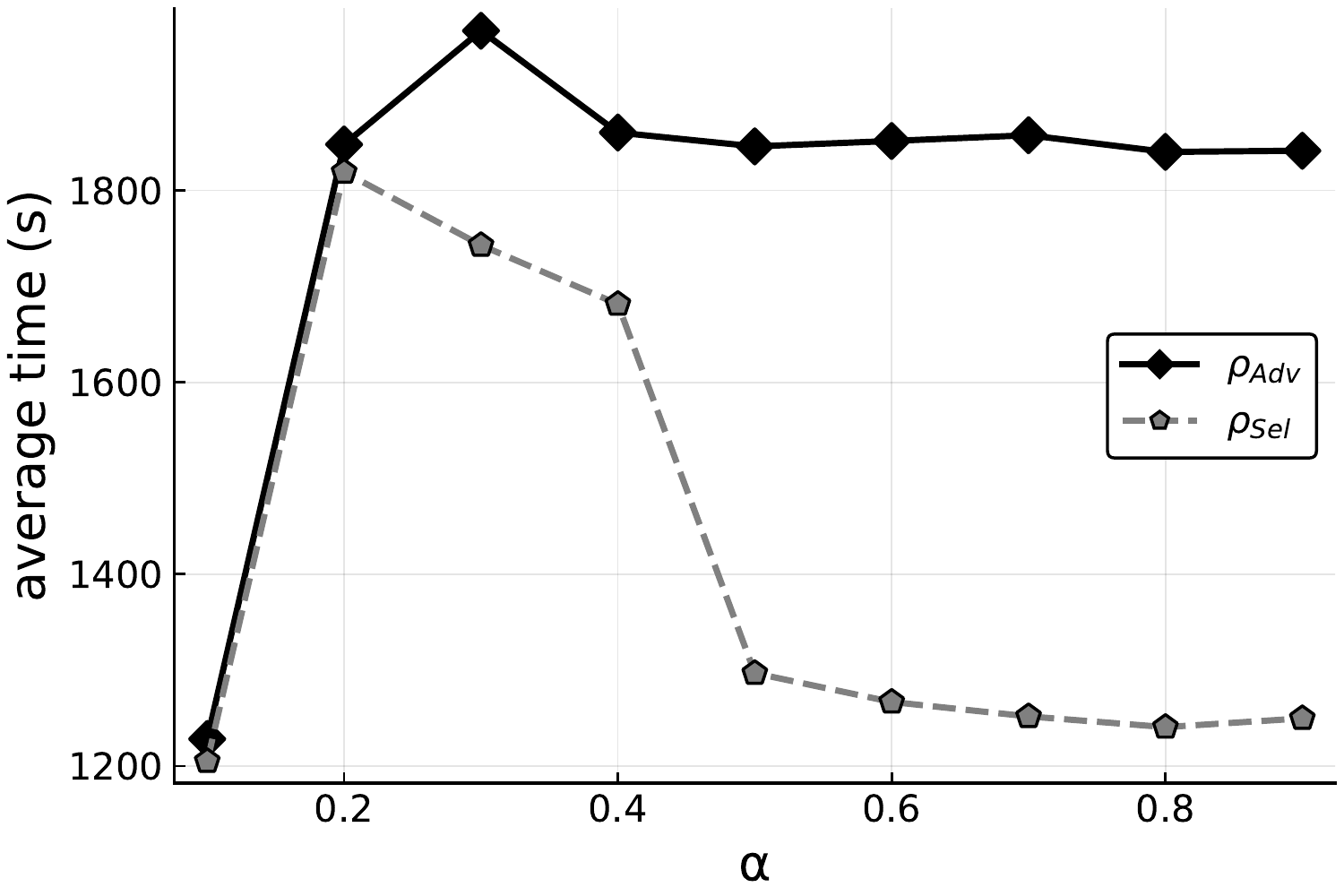}
	\caption{The average ratios $\rho_k$ and the average times for computing them for the knapsack problem with $n=400$. The time required to compute $\rho_H$ is negligible.} \label{fig10}
\end{figure}

In Figure~\ref{fig10} the results for $n=400$ are shown. One can observe similar relation between $\rho_{Sel}$ and $\rho_{Adv}$ as for the smaller problem with $n=100$.  However, the adversarial lower bound is now harder to compute and most instances were not solved to optimality (Algorithm~\ref{alg1} was terminated after the time of 600 seconds was exceeded). Observe that there is no significant improvement of $\rho_{Adv}$ over $\rho_{h}$.  

\subsection{Summary of the tests}

Let us briefly summarize the results of the tests. For the assumed method of data generation, the ratio $\rho(\pmb{c}_0)$ is almost always not greater than~2. Furthermore $\rho(\pmb{c}_0)$ can be computed efficiently for quite large instances, with thousands of variables. This suggests that the better of solutions $\underline{\pmb{x}}$, $\overline{\pmb{x}}$ has, for the tested instances, the empirical approximation ratio less than~2. One can conclude that this ratio is indeed significantly smaller than~2, by using better lower bounds. However, computing these lower bounds is more time consuming and can be done efficiently for smaller instances.

The solutions $\underline{\pmb{x}}$, $\overline{\pmb{x}}$  can be computed by solving the recoverable problem. For the assignment and knapsack problem, the recoverable problem is not particularly difficult to solve by \texttt{CPLEX}. However, the evaluation problem is more difficult, as we have to use the relaxation algorithm to perform this task. For large instances, we should assume more time for executing Algorithm~\ref{alg1}. We can then choose the solution among $\underline{\pmb{x}}$, $\overline{\pmb{x}}$, which has better upper bound on the value of $\textsc{Eval}(\pmb{x})$.

Notice that the techniques proposed in this paper are  general. For specific problem $\mathcal{P}$, the incremental, recoverable and evaluation problems can be solved by specialized algorithms (even in polynomial time). So, one can obtain better estimations for larger instances.

\section{Conclusions}

In this paper we considered a general class of 0-1 optimization problems, which can be polynomially solvable or NP-hard. The recoverable robustness concept was applied to take into account the possibility of performing a recourse action on the current first-stage solution. We proposed to use a polyhedral uncertainty representation. This model of uncertainty, containing the continuous interval budgeted uncertainty as a special case, can be easy to provide in practical applications. Moreover, it may lead to more tractable problems than other uncertainty representations. Unfortunately, the resulting min-max-min problem can be still complex to solve. Instead of solving the problem to optimality, we proposed to use some approximate solutions. The quality of these solutions can be estimated by using various lower bounds. One can apply this approach to quite large instances of the recoverable version of any 0-1 programming problem. In this paper we did not consider any particular 0-1 optimization problem. The computations can be done more efficiently if a polynomial algorithm is known for the incremental or recoverable problem. However, this is the case only for very specific problems such as selection or minimum spanning tree.

\subsubsection*{Acknowledgements}
This work was  supported by
 the National Science Centre, Poland, grant 2017/25/B/ST6/00486.


\begin{thebibliography}{10}

\bibitem{AMO93}
R.~K. Ahuja, T.~L. Magnanti, and J.~B. Orlin.
\newblock {\em Network {F}lows: theory, algorithms, and applications}.
\newblock Prentice Hall, Englewood Cliffs, New Jersey, 1993.

\bibitem{A01}
I.~Averbakh.
\newblock On the complexity of a class of combinatorial optimization problems
  with uncertainty.
\newblock {\em Mathematical Programming}, 90:263--272, 2001.

\bibitem{BKS95}
A.~Bar-Noy, A.~S. Khuller, and B.~Schieber.
\newblock The complexity of finding most vital arcs and nodes.
\newblock Technical Report CS-TR-3539, Institute for Advanced Studies,
  University of Maryland, College Park, MD, 1995.

\bibitem{BN09}
A.~Ben-Tal, L.~El~Ghaoui, and A.~Nemirovski.
\newblock {\em Robust optimization}.
\newblock Princeton Series in Applied Mathematics. Princeton University Press,
  Princeton, NJ, 2009.

\bibitem{BTG04}
A.~Ben-Tal, A.~Goryashko, E.~Guslitzer, and A.~Nemirovski.
\newblock Adjustable robust solutions of uncertain linear programs.
\newblock {\em Mathematical Programming A}, 99:351--376, 2004.

\bibitem{BS03}
D.~Bertsimas and M.~Sim.
\newblock Robust discrete optimization and network flows.
\newblock {\em Mathematical Programming}, 98:49--71, 2003.

\bibitem{BK17}
C.~Buchheim and J.~Kurtz.
\newblock Min-max-min robust combinatorial optimization.
\newblock {\em Mathematical Programming A}, 163:1--23, 2017.

\bibitem{B11}
C.~B{\"u}sing.
\newblock {\em Recoverable robustness in combinatorial optimization}.
\newblock PhD thesis, Technical University of Berlin, Berlin, 2011.

\bibitem{B12}
C.~B{\"u}sing.
\newblock Recoverable robust shortest path problems.
\newblock {\em Networks}, 59:181--189, 2012.

\bibitem{CG15b}
A.~Chassein and M.~Goerigk.
\newblock On the recoverable robust traveling salesman problem.
\newblock {\em Optimization Letters}, 10:1479--1492, 2016.

\bibitem{CGKZ18}
A.~Chassein, M.~Goerigk, A.~Kasperski, and P.~Zieli{\'n}ski.
\newblock On recoverable and two-stage robust selection problems with budgeted
  uncertainty.
\newblock {\em European Journal of Operational Research}, 265:423--436, 2018.

\bibitem{SAO09}
O.~\c{S}eref, R.~K. Ahuja, and J.~B. Orlin.
\newblock Incremental network optimization: theory and algorithms.
\newblock {\em Operations Research}, 57:586--594, 2009.

\bibitem{HKZ16a}
M.~Hradovich, A.~Kasperski, and P.~Zieli{\'n}ski.
\newblock Recoverable robust spanning tree problem under interval uncertainty
  representations.
\newblock {\em Journal of Combinatorial Optimization}, 34:554--573, 2017.

\bibitem{HKZ16}
M.~Hradovich, A.~Kasperski, and P.~Zieli{\'n}ski.
\newblock The recoverable robust spanning tree problem with interval costs is
  polynomially solvable.
\newblock {\em Optimization Letters}, 11:17--30, 2017.

\bibitem{CPLEX}
{IBM~ILOG~CPLEX~Optimization~Studio}.
\newblock {CPLEX} {U}ser's manual.
\newblock {https://www.ibm.com}.

\bibitem{KZ15b}
A.~Kasperski and P.~Zieli{\'n}ski.
\newblock Robust recoverable and two-stage selection problems.
\newblock {\em Discrete Applied Mathematics}, 233:52--64, 2017.

\bibitem{KY97}
P.~Kouvelis and G.~Yu.
\newblock {\em Robust Discrete Optimization and its Applications}.
\newblock Kluwer Academic Publishers, 1997.

\bibitem{LLMS09}
C.~Liebchen, M.~E. L{\"{u}}bbecke, R.~H. M{\"{o}}hring, and S.~Stiller.
\newblock The concept of recoverable robustness, linear programming recovery,
  and railway applications.
\newblock In {\em Robust and {O}nline {L}arge-{S}cale {O}ptimization}, volume
  5868 of {\em Lecture Notes in Computer Science}, pages 1--27.
  Springer-Verlag, 2009.

\bibitem{LC93}
K.~Lin and M.~S. Chern.
\newblock The most vital edges in the minimum spanning tree problem.
\newblock {\em Information Processing Letters}, 45:25--31, 1993.

\bibitem{NO13}
E.~Nasrabadi and J.~B. Orlin.
\newblock Robust optimization with incremental recourse.
\newblock {\em CoRR}, abs/1312.4075, 2013.

\bibitem{ZZ13}
B.~Zeng and L.~Zhao.
\newblock Solving two-stage robust optimization problems using a column and
  constraint generation method.
\newblock {\em Operation Research Letters}, 41:457--461, 2013.

\end{thebibliography}

\end{document}